\documentclass[journal,a4paper,10pt]{IEEEtran}  

\addtolength{\topmargin}{9mm}  

%
%
\usepackage[utf8]{inputenc} 
\usepackage[T1]{fontenc}
\usepackage{url}
\usepackage{ifthen}
\usepackage{cite}
\usepackage[cmex10]{amsmath} 
\usepackage{amssymb,mathtools}
\usepackage{multirow}
\usepackage{graphicx}
\usepackage{amsthm}

\usepackage{mleftright}

\usepackage{framed}
\usepackage[framemethod=TikZ]{mdframed}


\interdisplaylinepenalty=2500 

\hyphenation{op-tical net-works semi-conduc-tor}


\usepackage{tikz}
\usetikzlibrary{tikzmark}
\usetikzlibrary{shapes,snakes}
\usetikzlibrary{fit,backgrounds}
\usetikzlibrary{decorations.pathreplacing,angles,quotes}
\usetikzlibrary{patterns}
\usetikzlibrary{calc}
\usetikzlibrary{positioning}
\tikzstyle{block} = [rectangle, draw, 
    minimum width=4em, text centered, rounded corners, minimum height=1em]
\tikzstyle{network} = [rectangle, draw, 
    minimum width=10em, text centered, rounded corners, minimum height=6em]
\tikzstyle{rec} = [rectangle, draw]
\tikzstyle{line} = [draw, -latex, thick]

\usepackage{color}

\newtheorem{lemm}{Lemma}[section]

\newtheorem{theo}{Theorem}[section]
\newtheorem{prop}{Proposition}[section]

\newcommand{\p}{\hspace*{0.5em}}

\DeclareMathOperator*{\Tr}{\text{Tr}}

\DeclareMathOperator{\Ima}{Im}
\DeclareMathOperator{\tr}{tr}
\DeclareMathOperator{\pr}{Pr}


\DeclareMathOperator{\cH}{{\mathcal{H}}}
\DeclareMathOperator{\rank}{rank}

\newcommand\kb[1]{| #1 \rangle\langle #1 |}

\newcommand\zerom[2]{\mathbf{0}_{#1,#2}}

\newcommand\lefto[1]{\mathsf{L}(#1)}
\newcommand\righto[1]{\mathsf{R}(#1)}
\newcommand\leftop[1]{\mathsf{L'}(#1)}
\newcommand\rightop[1]{\mathsf{R'}(#1)}

\DeclarePairedDelimiter\px{\{}{\}}
\DeclarePairedDelimiter\paren{(}{)}
\DeclarePairedDelimiter\bparen{[}{]}
\DeclarePairedDelimiter\ceil{\lceil}{\rceil}
\DeclarePairedDelimiter\floor{\lfloor}{\rfloor}

\DeclarePairedDelimiter\bra{\langle}{\rvert}
\DeclarePairedDelimiter\ket{\lvert}{\rangle}
\DeclarePairedDelimiterX\braket[2]{\langle}{\rangle}{#1 \delimsize\vert #2}
\DeclarePairedDelimiterX\ketbra[2]{\delimsize\vert}{\delimsize\vert}{#1 \rangle\langle #2}

\DeclarePairedDelimiter\bitk{\rvert}{\rangle_{b}}

\DeclarePairedDelimiter\phasek{\lvert}{\rangle_p }

\def\cHp{\mathcal{H}'}
\def\mm{m_0}
\def\vsec{S}
\def\FF{\mathbb{F}}

\def\Q{q'}
\def\Code{\mathsf{C}}

\begin{document}
\title{
Secure Quantum Network Code without Classical Communication 
}


\author{%
Seunghoan Song
 and~Masahito~Hayashi,~\IEEEmembership{Fellow,~IEEE}
\thanks{S. Song and M. Hayashi are with Graduate school of Mathematics, Nagoya University, Nagoya, Japan
(e-mail: m17021a@math.nagoya-u.ac.jp, masahito@math.nagoya-u.ac.jp).}
\thanks{M. Hayashi is also with 
Shenzhen Institute for Quantum Science and Engineering, Southern University of Science and Technology and the Centre for Quantum Technologies, 
National University of Singapore, Singapore.}
\thanks{This paper was presented in part at Proceedings of 2018 IEEE Information Theory Workshop, Guanzhou, China.}
}


\maketitle

\begin{abstract}  
We consider the secure quantum communication over a network with the presence of a malicious adversary who can eavesdrop and contaminate the states.
{
    The network consists of noiseless quantum channels with the unit capacity and the nodes which applies noiseless quantum operations.
    As the main result, when the maximum number $m_1$ of the attacked channels over the entire network uses is less than a half of the network transmission rate $m_0$ (i.e., $m_1 < m_0/2$), our code implements secret and correctable quantum communication of the rate $m_0-2m_1$ by using the network asymptotic number of times.
}
  Our code is universal in the sense that the code is constructed without the knowledge of the specific node operations and the network topology,
  but instead, every node operation is constrained to the application of an invertible matrix to {the} basis states.
  Moreover, our code requires no classical communication.
  Our code can be thought of as a generalization of the quantum secret sharing.
\end{abstract}

\begin{IEEEkeywords}
quantum network code, quantum error-correction, CSS code, universal construction, malicious adversary.
\end{IEEEkeywords}

\section{Introduction}

\IEEEPARstart{N}{etwork} coding is a coding method, addressed first by Ahlswede et al. \cite{ACLY},
that allows network nodes to manipulate information {packets} before forwarding.
As a quantum analog, quantum network coding considers sending quantum states
through a network which consists of noiseless quantum channels and nodes performing quantum operations.
Since it was first discussed by Hayashi et al. \cite{Hayashi2007},
many other papers \cite{PhysRevA.76.040301,Kobayashi2009,Kobayashi2010, Kobayashi2011,Leung2010,OKH17,OKH17-2} have studied quantum network codes.

Classical network codes with security have been studied by two different methods.
One method is to combine the network node controls and an end-to-end code.
In this method, the sender and receiver know the network topology, control the node operations, and construct an end-to-end code between them.
The use of the end-to-end code is important because it generates the redundancy which is necessary for the security guarantee.
By this method, Cai and Yeung \cite{Cai2002} first devised a classical network code which guarantees the secrecy of the communication.
Secure classical network codes by this method have been further studied in \cite{FMSS04, RS07}.

The other method for secure classical network codes is to use only an end-to-end code without controlling node operations.
In this method, the node operations are not directly controlled but constrained,
and an end-to-end code is constructed with the knowledge of the constraints without specific knowledge of the underlying node operations and the network topology.
Although the codes \cite{Jaggi2008,Matsumoto2011a,YSJL14,HOKC17} by this method do not control the node operations, which differs from the original definition of the network code in \cite{ACLY},
these codes are also called network codes.
By this method, Jaggi et al. \cite{Jaggi2008} constructed a classical network code with asymptotic error correctability.
In the paper \cite{Jaggi2008}, all node operations are not controlled but constrained to be linear operations, and the code is universal 
in the sense that the code is constructed independently of the network topology and the particular node operations.
When the transmission rate $m_0$ of the network and the maximum rate $m_1$ of the malicious injection satisfy $m_1< m_0$,
the code in \cite{Jaggi2008} achieves the correctability with the rate $m_0-m_1$ by asymptotic $n$ uses of the network.
Furthermore, Hayashi et al. \cite{HOKC17} extended the result in \cite{Jaggi2008} so that the secrecy is also guaranteed:
when previously defined $m_0$, $m_1$, and the information leakage rate $m_2$ satisfy $m_1+m_2 <m_0$,
the classical network code in \cite{HOKC17}
achieves the secrecy and the correctability with the rate $m_0-m_1-m_2$ by asymptotic $n$ uses of the network.

On the other hand, 
secure quantum network codes have been designed by Owari et al. \cite{OKH17} and Kato et al. \cite{OKH17-2}.
However, 
the codes in \cite{OKH17,OKH17-2} only keep secrecy from the malicious adversary
but do not guarantee the correctness of the transmitted state if there is an attack.
Moreover, 
this code depends on the network topology and requires classical communication.

In this paper, 
to resolve these problems
and
as a natural quantum extension of the secure classical network codes \cite{Jaggi2008,HOKC17}, 
we present a quantum network code which is secret and correctable.
Since we take a similar method to \cite{Jaggi2008,HOKC17}, our code 
{consists only of an end-to-end code without node operation controls and}
transmits a state by multiple $n$ uses of the quantum network.
When the network transmission rate is $m_0$ and the maximum number $m_1$ of the attacked channels satisfy $m_1 < m_0/2$,
{our code transmits quantum information of the rate $m_0-2m_1$ with high fidelity by asymptotic $n$ uses of the network.}
Since {the high fidelity} of the transmitted quantum state guarantees the secrecy of the transmission \cite{Schumacher96},
the secrecy of our code is guaranteed.

There are several notable properties in our code.
First, our code is universal
in the sense that the code construction does not depend on the network topology and the particular node operations.
{Instead, we place two constraints on the network topology and node operations.
That is, at every node, the number of incoming edges is the same as the number of outgoing edges, 
and, similarly to \cite{Jaggi2008,HOKC17} but differently from \cite{OKH17,OKH17-2}, 
every node operation is the application of an invertible matrix to basis states.
Then, our code is constructed 
by using the constraints but 
without any knowledge of the network topology and operations.}
{Secondly, our code can be constructed without any classical communication.
Though a negligible rate secret shared randomness is necessary for our code construction, 
we attach a subprotocol in order for sharing the randomness by use of the quantum network,
and therefore no classical communication or no assumption of shared randomness is needed.}
Thirdly, our code is secure from any malicious operation on $m_1$ channels
if $m_1 < m_0/2$.
That is, when $m_1 < m_0/2$, our code is secure from the strongest eavesdropper 
who knows the network topology and the network operations, keeps  classical
information extracted from the wiretapped states, and applies quantum operations on the attacking channels adaptively by her wiretapped information.
Fourthly, 
when the network consists of parallel $m_0$ quantum channels,
our code can be thought of as 
an error-tolerant quantum secret sharing \cite{CGS05}.

The rest of this paper is organized as follows.
Section \ref{sec:model} formally describes the quantum network and the attack model.
Section \ref{sec:main_result} presents two main results of the paper, and compares our quantum network code with {the quantum maximum distance separable (MDS) codes and quantum secret sharing}.
Based on the preliminaries in Section \ref{sec:prelim},
Section \ref{sec:protocol} constructs our code when a negligible rate secret shared randomness is assumed. 
Section \ref{sec:analysis} evaluates the performance of the code 
and shows that the entanglement fidelity of the code protocol is bounded by the sum of two error probabilities, called bit error probability and phase error probability.
Section \ref{sec:errors} derives upper bounds of the bit error probability and phase error probability, respectively.
Section \ref{sec:SQNC} constructs our code without assuming any negligible rate secret shared randomness.
Section \ref{sec:secrecy} analyzes the secrecy of our code.
Section \ref{sec:conclusion} is the conclusion of the paper.

\section{Quantum Network and Attack Model}   \label{sec:model}

We give the formal description of our quantum network which is defined as a natural quantum extension of a classical network.
The notations in the network and attack model are summarized in Table \ref{tab:rates}, and an example of the quantum network is given in Fig.~\ref{fig:1}.

\begin{table} 
\renewcommand{\arraystretch}{1.2}
    \centering
    \caption{Summary of Notations} \label{tab:rates}
    \begin{tabular}[ht]{|l|l|}
    \hline
    $m_0$ & Network transmission rate without attack \\ \hline
    $m_1$ ($< m_0/2$) & Maximum number of attacked channels \\ \hline
    $m_a$ ($\leq m_1$) & Number of attacked channels \\ \hline
    $\cH$ & Unit quantum system \\ \hline
    $q$ & Dimension of $\cH$ (prime power) \\ \hline
    $n$ & Block-length \\ \hline
    $\mathcal{F}$ & Network structure \\ \hline     
    $S_n$ & Strategy of malicious attack \\ \hline
    $\Gamma[\mathcal{F}^{n},S_n]$ & Network operation \\ \hline
    $\Code_n$ & Quantum network code \\ \hline
    $\cH_{\mathrm{code}}^{(n)}$ & Code space \\ \hline
    $\Lambda_n = \Lambda[\mathsf{C}_{n},\mathcal{F}^{n},S_n]$ & Averaged protocol by code randomness  \\ \hline
    $\cHp$ & Extended unit quantum system \\ \hline
    $\alpha$ & Dimension of extension \\ \hline
    $\Q=q^{\alpha}$ & Dimension of $\cHp$  \\ \hline
    $n'$ & Block-length with respect to $\cHp$ \\ \hline
    $|x \rangle_b$ ($x\in\mathbb{F}_q$ ($\mathbb{F}_{\Q}$)) & Bit basis element of $\cH$ ($\cHp$) \\ \hline
    $|z \rangle_p$ ($z\in\mathbb{F}_q$ ($\mathbb{F}_{\Q}$)) & Phase basis element of $\cH$ ($\cHp$) \\ \hline
    \end{tabular}
\end{table}

\subsection{Network structure and transmission}

We consider the network described by a directed acyclic graph $G_{m_0}=(V,E)$ 
where $V$ is the set of nodes (vertices) and $E$ is the set of channels (edges).
The network $G_{m_0}$ has 
one source node $v_0$, intermediate nodes $v_1$, \ldots, $v_c$ ($c:=|V|-2$), and one sink node $v_{c+1}$,
where the subscript represents the order of the information conversion.
The source node $v_0$ and the sink node $v_{c+1}$ have $m_0$ outgoing and incoming channels, respectively,
and each intermediate node $v_t$ has the same number {$k_t\in\{1,\ldots, m_0\}$} of incoming and outgoing channels.
For convenience, we define $k_0=k_{c+1} := m_0$.

The transmission on the network $G_{m_0}$ is described as follows.
{Each channel transmits information noiselessly unless the channel is attacked, and each node applies an information conversion noiselessly at any time.}
At time $0$, the source node transmits the input information along the $m_0$ outgoing channels.
At time {$t\in\{1,\ldots,c\}$}, the node $v_t$ applies an information conversion to the information from the $k_t$ incoming channels, and outputs the conversion outcome along the $k_t$ outgoing channels.
At time $c+1$, the sink node receives the output information from the $m_0$ incoming channels.
The detailed constraints of the transmitted information and information conversion are described in the following subsections.

{The $m_0$ outgoing channels of the source node are numbered from $1$ to $m_0$, and 
after the conversion in the node $v_t$, the assigned numbers are changed from $k_t$ incoming channels to $k_t$ outgoing channels {deterministically}.

\subsection{Classical network}
To explain our model of the quantum network, we first consider the classical network.
Every single use of a channel transmits one symbol of the finite field $\FF_q$ of order $q$.
Hence, the information {at each time} is described by the vector space $\FF_q^{m_0}$.
We assume that the information conversion at each intermediate node is an invertible linear operation.
That is, the information conversion at each intermediate node $v_t$ is written as an invertible $k_t \times k_t$ matrix $A_t$ acting only on the $k_t$ components of the vector space $\FF_q^{m_0}$.
Therefore, combining all the conversions, 
the relation between the input information $x\in\FF_q^{m_0}$ and the output information $y\in\FF_q^{m_0}$ can be characterized by an invertible $m_0 \times m_0$ matrix $K$ as
\begin{align}
y= K x.
\end{align}

\begin{figure}[t]
\centering
\begin{tikzpicture}[transform shape, node distance = 5cm, auto, every node/.style={outer sep=0}]
    \node [circle,draw,inner sep = 0.2em] (source) {$v_0$};
    \node [below left=-0.1em and -1em of source] {Source};
    \node [circle,draw, above right=4em and 4em of source,inner sep = 0.2em] (node1) {$v_1$};
    \node [above=-0.1em of node1] {$\lefto{A_1}$};
    \node [circle,draw, below right=4em and 6em of source,inner sep = 0.2em] (node2) {$v_2$};
    \node [below=-0.1em of node2] {$\lefto{A_2}$};
    \node [circle,draw, right=10em of source,inner sep = 0.2em] (node3) {$v_3$};
    \node [above=-0.1em of node3] {$\lefto{A_3}$};
    \node [circle,draw, right=18em of source,inner sep = 0.2em] (target) {$v_4$};
    \node [below right=0.1em and -1em of target] {Target};
    
    \path [line] (source) edge [bend left=30] (target);
    \path [line] (source) edge [bend left=25]  (node1);
    \path [line] (source) edge [bend left=0]  (node1);
    
    \path [line] (node1) edge [bend left=25] (target);
    \path [line] (node1) edge (node2);
    
    \path [line] (source) edge [bend left=15]  (node2);
    \path [line] (source) edge [bend right=15]  (node2);
    
    \path [line] (source) edge  (node3);

    \path [line] (node2) edge [bend left=15] (node3);
    \path [line] (node2) edge [bend right=15] (node3);
    \path [line] (node2) edge [bend right=15] (target);
    
    \path [line] (node3) edge [bend left=20] (target);
    \path [line] (node3) edge [bend right=20] (target);
    \path [line] (node3) edge  (target);
\end{tikzpicture}
\begin{align*}
A_1 = \begin{bmatrix}
        1 & 3 \\
        2 & 3
      \end{bmatrix},
\quad
A_2 = \begin{bmatrix}
        1 & 2 & 3 \\
        0 & 1 & 5 \\ 
        5 & 6 & 0 
      \end{bmatrix},
\quad
A_3 = \begin{bmatrix}
        0 & 2 & 2 \\
        1 & 1 & 1 \\ 
        0 & 1 & 2 
      \end{bmatrix}.
\end{align*}
\caption{Quantum network with three intermediate nodes. 
Source and sink nodes have $m_0=6$ outgoing and incoming channels, respectively, and 
each intermediate node has the same number of incoming and outgoing channels.
Each channel transmits $7$-dimensional Hilbert space, i.e., $q=7$, and each intermediate node $v_t$ for $t=1,2,3$ applies $\lefto{A_t}$, where $A_t$ is an invertible matrix over $\mathbb{F}_7$.
} \label{fig:1}
\vspace{1em}

\begin{tikzpicture}[transform shape, node distance = 5cm, auto, every node/.style={outer sep=0}]
    \node [circle,draw,inner sep = 0.2em] (source) {$v_0$};
    \node [below left=-0.1em and -1em of source] {Source};
    \node [circle,draw, above right=4em and 4em of source,inner sep = 0.2em] (node1) {$v_1$};
    \node [above=-0.1em of node1] {$\lefto{A_1}$};
    \node [circle,draw, below right=4em and 6em of source,inner sep = 0.2em] (node2) {$v_2$};
    \node [below=-0.1em of node2] {$\lefto{A_2}$};
    \node [circle,draw, right=10em of source,inner sep = 0.2em] (node3) {$v_3$};
    \node [above=-0.1em of node3] {$\lefto{A_3}$};
    \node [circle,draw, right=18em of source,inner sep = 0.2em] (target) {$v_4$};
    \node [below right=0.1em and -1em of target] {Target};
    
    \path [line] (source) edge [bend left=30] (target);
    
    \path [line] (source) edge [bend left=25, pos=0.8,->,decorate,decoration={snake,amplitude=.4mm,segment length=2mm,post length=1mm}] node {{Attack}} (node1);
    \path [line] (source) edge [bend left=0]  (node1);
    
    \path [line,dashed] (node1) edge [bend left=25] (target);
    \path [line,dashed] (node1) edge (node2);
    
    \path [line] (source) edge [bend left=15]  (node2);
    \path [line] (source) edge [bend right=15]  (node2);
    
    \path [line] (source) edge  (node3);

    \path [line,dashed] (node2) edge [bend left=15] (node3);
    \path [line,dashed] (node2) edge [bend right=15] (node3);
    \path [line,dashed] (node2) edge [bend right=15] (target);
    
    \path [line,dashed] (node3) edge [bend left=20] (target);
    \path [line,dashed] (node3) edge [bend right=20] (target);
    \path [line,dashed] (node3) edge  (target);
\end{tikzpicture}
\caption{Propagation of malicious corruption in quantum network of Fig.~1 when Eve attacks the first channel (zigzagged) of the source node. 
The malicious corruption propagates by node operations along dashed channels.
The target node receives $5$ corrupted unit quantum systems.}   \label{fig:2}
\end{figure}

We extend the above discussion to the case of $n$ network uses, i.e., 
the input and output informations are written as $X=[x_1,\ldots,x_n]\in\mathbb{F}_q^{m_0\times n}$ and $Y=[y_1,\ldots,y_n]\in\mathbb{F}_q^{m_0\times n}$.
We assume that 
every intermediate node $v_t$ applies the invertible matrix $A_t$ {at $n$ times} and
the matrix $A_t$ is not changed during the $n$ transmissions.
In addition, we assume that the inputs $x_1,\ldots,x_n$ are {independently transmitted}, i.e., 
$y_i = Kx_i$ holds for any $i\in\{1,\ldots,n\}$.
Therefore, we have {the relation}
\begin{align}
Y= K X.\label{H1}
\end{align}

Next, we extend more to the case where a malicious adversary Eve attacks $m_a$ ($\leq m_1$) channels, i.e., fixed $m_a$ channels are attacked over $n$ uses of the network (Fig. 2).
Since all the node operations are linear,
there is a linear relation between 
the information on each channel and output information.
That is, there are $m_a$ vectors $w_1, \ldots, w_{m_a}$ in $\FF_q^{m_0}$ satisfying the following condition:
when Eve adds the noise $z_1, \ldots, z_{m_a} \in \FF_q^n$ on the $m_a$ attacked channels,
the relation \eqref{H1} is changed to
\begin{align}
Y
= K X+ \sum_{j=1}^{m_a} w_j z_j^{\top}
= K X+
W Z,\label{H2}
\end{align}
where $W=[w_1, \ldots, w_{m_a}]$ and $Z=[z_1, \ldots, z_{m_a}]^{\top}$.
Here, the vectors $w_1, \ldots, w_{m_a}$
are determined by the network topology and node operations.
For the detail, see \cite[Section 2.2]{OKH17-2}.
Even in the case where Eve chooses the noise $Z$ dependently of the input information $X$, 
the output information $Y$ is always written in the form \eqref{H2}. 

\subsection{Quantum network} \label{subsec:quant_network}
We consider a natural quantum extension of the above classical network.
Every single use of a quantum channel transmits a quantum system $\mathcal{H}$ of dimension $q$ spanned by a basis $\px{|x\rangle_{b} \mid x \in \FF_q}$
{which is called the {\em bit basis}.}
In $n$ uses of the network,
the whole system to be transmitted is written as $\mathcal{H}^{\otimes m_0 \times n} $
spanned by $\px{ | X\rangle_{b} \mid X \in \FF_q^{m_0 \times n}}$.
To describe the node operations, we introduce the following unitary operations:
for an invertible $m \times m $ matrix $A$ and an invertible $n \times n $ matrix $B$,
two unitaries $ \lefto{A}$ and $\righto{B}$ are defined as
\begin{align}
 \lefto{A} &:=\!\!\! \sum_{X\in\FF_q^{m \times n}}\!\!\!\ket{AX}_{bb}\bra{X}, \!\quad
 \righto{B} :=\!\!\! \sum_{X\in\FF_q^{m \times n}}\!\!\!\ket{XB}_{bb}\bra{X}.\!
 \label{eq:chops}
\end{align}
Every node $v_t$ converts the information on the subsystem $ \mathcal{H}^{\otimes k_t \times n}$ 
by applying the unitary $\lefto{A_t}$.
If there is no attack, the operation of the whole network is the application of 
the unitary $\lefto{K} $.

{Next, we introduce Eve's attack model.
Eve attacks fixed $m_a$ ($\leq m_1$) channels over $n$ uses of the network.
Whenever quantum systems are transmitted over the $m_a$ attacked channels,
Eve can perform on the systems any trace preserving and completely positive (TP-CP) maps, measurements defined by positive operator-valued measure (POVM), or both.
We assume that Eve's operations can be adaptive on the previous measurement outcomes and Eve knows the network topology and all node operations.
}

{
Consider the entire network operation with malicious attacks.
When Eve attacks on channels,
the network structure $\mathcal{F}$ is characterized by the network topology $G_{m_0}=(V,E)$, node operations $A=(A_1,\ldots,A_c)$, and the set $E_{\mathrm{att}}\subset E$ of attacked channels, i.e., $\mathcal{F} := (G_{m_0}, A, E_{\mathrm{att}})$.
Given a network structure $\mathcal{F}$, Eve's strategy $S_{n}$ over $n$ network uses determines the TP-CP map of the entire network operation.
Therefore, we denote the entire network operation over $n$ network uses as a TP-CP map 
\begin{align}
\Gamma [\mathcal{F}^n, S_n],    \label{eq:network_operations}
\end{align}
where $\mathcal{F}^n$ denotes the network structure $\mathcal{F}$ is used $n$ times.
As a special case, if $E_{\mathrm{att}} = \emptyset$,
we have $\Gamma [\mathcal{F}^n, S_n] = \lefto{K}\rho\lefto{K}^{\dagger}$.
Moreover, 
we define the set $\zeta_{m_0,m_1}^{(n)}$ of all network structures and strategies 
of transmission rate $m_0$ without attacks, at most $m_1$ attacked channels, and block-length $n$
as
\begin{align}
&\zeta_{m_0,m_1}^{(n)} \nonumber\\
&:= \{(\mathcal{F} , S_n) \mid \mathcal{F}= (G_{m_0}, A, E_{\mathrm{att}}), \ \! m_a\!=\! |E_{\mathrm{att}}| \!\leq\! m_1\}.
 \label{eq:set_network_operations}
\end{align}
}

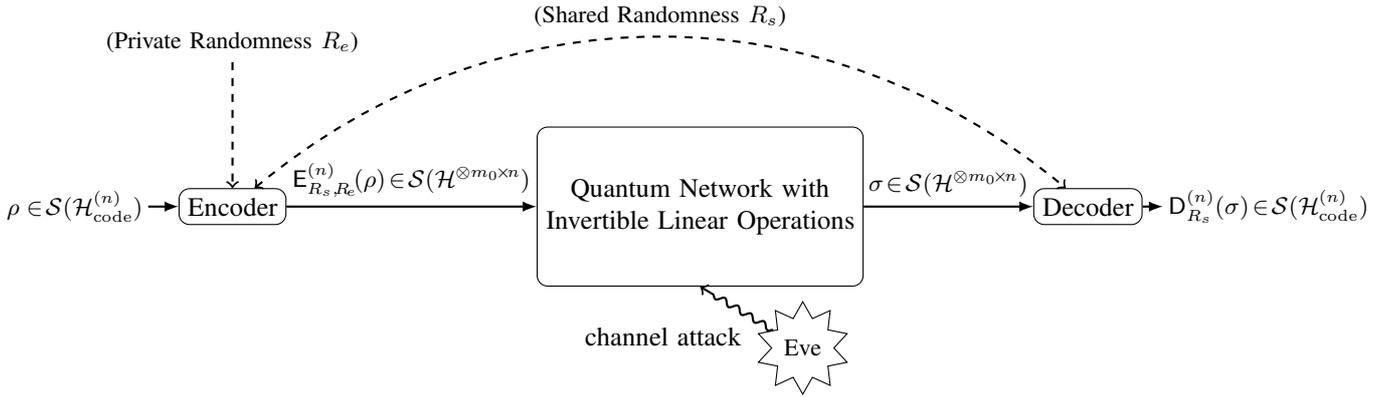
\begin{figure*}[tb]
        \centering
        
\begin{tikzpicture}[scale=0.75, every text node part/.style={align=center},node distance = 5cm, auto]
    \node [network] (channel) {Quantum Network with\\ Invertible Linear Operations};
    \node [block] (encoder) at (-8.2,0) {Encoder};
    \node (pr) at (-8.2,2.9) {\small (Private Randomness $R_e$)};
    \node [block] (decoder) at (6.8,0) {Decoder};
    \node at (-11, 0) (input) {\small $\rho\in\! \mathcal{S}(\cH_{\mathrm{code}}^{(n)})$\!};
    \node at (10, 0) (output) {\small \!$\mathsf{D}^{(n)}_{R_s}(\sigma)\!\in\! \mathcal{S}(\cH_{\mathrm{code}}^{(n)})$} ;
    
    \path [line,dashed] (encoder) edge [<->,bend left=40] node {\small (Shared Randomness $R_s$)} (decoder) ;
    \path [line] (input) -- (encoder);
    \path [line] (encoder) -- node[pos=0.5] {\small $\mathsf{E}_{R_s,\!R_{\!e}}^{(n)}\!(\rho)\!\in\!\mathcal{S}(\cH^{\otimes m_0\!\times\! n})$}(channel);
    \path [line] (channel) -- node[pos=0.5] {\small $\sigma\!\in\!\mathcal{S}(\cH^{\otimes m_0\!\times\! n})$}(decoder);
    \path [line] (decoder) -- (output);
    \path [line,dashed] (pr) edge [->] (encoder);
    
    \node at (1.8,-2.5) [draw,star,star points=10,inner sep=1.3pt] (eve) {\small Eve};
    \path [line] (eve) edge [pos=0.3,->,decorate,decoration={snake,amplitude=.4mm,segment length=2mm,post length=1mm}] node {{channel attack}} (channel.south);
\end{tikzpicture}
        \caption{Protocol with negligible rate secret shared randomness. 
        $\mathcal{S}(\mathcal{H})$ denotes the set of density matrices on the Hilbert space $\mathcal{H}$.} \label{fig1}
\end{figure*}

\section{Main Results} \label{sec:main_result}

In this section, we present the two coding theorems with and without a negligible rate secret shared randomness.
For any quantum network described in Section \ref{sec:model},
our code can be constructed 
only with the knowledge of $m_0$, $m_1$, and $q$, 
but without any specific knowledge of the node operations $\lefto{A_t}$ and the network topology $G_{m_0}$.

\subsection{Main idea in our code construction}

In order to explain the main idea of our code, we briefly introduce the classical network codes in \cite{Jaggi2008,HOKC17}.
In \cite{Jaggi2008,HOKC17}, node operations are restricted to be linear operations. 
Therefore, malicious injections on channels form a subspace in the network output, in the same way as \eqref{H2}.
Then, the codes in \cite{Jaggi2008,HOKC17} find the subspace of injections from the network output with the help of secret shared randomness between the sender and receiver.
Finally, the codes recover the original message from the information not in the subspace of injections.

By the above method of the classical network codes in \cite{Jaggi2008,HOKC17},
our quantum network code is designed in the following way.
Since our quantum network in Section \ref{sec:model} is defined as a natural quantum extension of the classical networks in \cite{Jaggi2008,HOKC17},
we can reduce the correctness of our code to that of two classical network codes which are defined on two bases of quantum systems (in Sections \ref{sec:analysis} and \ref{subsec:Application}).
In this reduction, our quantum network code is sophisticatedly defined so that the two classical network codes are similar to the codes in \cite{Jaggi2008,HOKC17}.
A difficult point in our code construction is that the accessible information from the network output state is restricted since a measurement disturbs the quantum states, whereas the classical network codes \cite{Jaggi2008,HOKC17} have access to all information of the network output.
Our code circumvents this difficulty by attaching to the codeword the ancilla whose measurement outcome contains sufficient information for finding the subspace of injections.

\subsection{Main theorems}
In this subsection, we present two coding theorems with and without a negligible rate secret shared randomness.

Before we state the two coding theorems, we formulate a quantum network code of block-length $n$.
Let $\mathcal{R}_s$ and $\mathcal{R}_e$ be {sets for the secret shared randomness and the private randomness parameters, respectively.}
Let $\cH_{\mathrm{code}}^{(n)}$ be a quantum system called the {\em code space}.
Given $(r_s,r_e)\in\mathcal{R}_s\times\mathcal{R}_e$, an encoder is defined as a TP-CP map $\mathsf{E}_{r_s,r_e}^{(n)}$ from $\cH_{\mathrm{code}}^{(n)}$ to $\cH^{\otimes m_0\times n}$,
and a decoder is defined as a TP-CP map $\mathsf{D}_{r_s}^{(n)}$ from $\cH^{\otimes m_0\times n}$ to $\cH_{\mathrm{code}}^{(n)}$.
{The parameter $r_s$} is assumed to be shared between the encoder and decoder but kept a secret to all others, 
and $r_e$ is a private randomnesses of the encoder.
Then, a quantum network code is defined as 
\begin{align}
\Code_{n}:= \{(\mathsf{E}_{r_s,r_e}^{(n)}, \mathsf{D}_{r_s}^{(n)})\mid (r_s,r_e)\in\mathcal{R}_s\times\mathcal{R}_e \}.
\end{align}

In order to evaluate the performance of a quantum network code $\mathsf{C}_{n}$, we consider the averaged protocol
\begin{align}
&\Lambda[\mathsf{C}_{n},\mathcal{F}^{n},S_n](\rho) \nonumber \\
&:=  \frac{1}{|\mathcal{R}_s\times\mathcal{R}_e|} \sum_{(r_s,r_e)} \mathsf{D}_{r_s}^{(n)} \circ \Gamma[\mathcal{F}^n, S_n] \circ \mathsf{E}_{r_s,r_e}^{(n)}(\rho),
\label{avp}
\end{align}
where the sum is taken in the set $\mathcal{R}_s\times\mathcal{R}_e$.
If there is no confusion, we denote $\Lambda[\mathsf{C}_{n},\mathcal{F}^{n},S_n]$ by $\Lambda_{n}$.
Then, the correctness and secrecy of the code is evaluated by the entanglement fidelity 
\begin{align}
F_e^2(\rho_{\mathrm{mix}},\Lambda_{n}):=\langle \Phi |\Lambda_{n}\otimes\iota_R(\kb{\Phi})|\Phi\rangle  \label{eq:ent_fid}
\end{align}
of the completely mixed state $\rho_{\mathrm{mix}}$ on $\cH_{\mathrm{code}}^{(n)}$ and the averaged protocol $\Lambda[\mathsf{C}_{n},\mathcal{F}^{n},S_n]$,
where $|\Phi\rangle$ is the maximally entangled state and $\iota_R$ is the identity operator on the reference system.

\begin{theo}[Quantum Network Code with Negligible Rate Secret Shared Randomness] \label{theo:SQNC_protocol}
Suppose that the sender and receiver can share any secret randomness of negligible size in comparison with the block-length.
When $m_1 < m_0/2$,
{there exist a sequence $\{n_{\ell}\}_{\ell=1}^{\infty}$ with $n_{\ell} \to \infty$ as $l\to \infty$ and}
a sequence $\{\Code_{n_{\ell}}\}_{\ell=1}^{\infty}$ of quantum network codes of block-lengths $n_{\ell}$
such that 
\begin{align}
&\lim_{\ell\to\infty} \frac{|\mathcal{R}_s|}{n_{\ell}} = 0,
        \label{eq:aaaa}\\
&\lim_{\ell\to\infty} \frac{ \log_q \dim \cH_{\mathrm{code}}^{(n_{\ell})}}{n_{\ell}} = m_0-2m_1,
        \label{eq:bbbb}\\
&\lim_{\ell\to\infty} \max_{(\mathcal{F}, S_{n_{\ell}})} n_{\ell}(1-F_e^2(\rho_{\mathrm{mix}},\Lambda_{n_{\ell}}) ) = 0,
        \label{eq:ent_fid_theo}
\end{align}
where 
$\Lambda_{n_{\ell}}:= \Lambda[\mathsf{C}_{n_{\ell}},\mathcal{F}^{n_{\ell}}, S_{n_{\ell}}]$, and
{the maximum is taken 
with respect to $(\mathcal{F}, S_{n_{\ell}})$ in $\zeta_{m_0,m_1}^{(n_{\ell})}$ which is defined in \eqref{eq:set_network_operations}.}
\end{theo}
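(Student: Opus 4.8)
The plan is to build a CSS-type quantum network code and to reduce its analysis, via the infidelity decomposition announced for the analysis section, to two classical network-coding problems: one on the bit basis and one on the phase basis. First I would fix an extension $\cHp$ of dimension $\Q=q^{\alpha}$ together with a block-length $n'$, and construct the encoder $\mathsf{E}_{r_s,r_e}^{(n)}$ from a pair of nested linear codes over $\FF_{\Q}$ so that the code space $\cH_{\mathrm{code}}^{(n)}$ carries logical information of rate $m_0-2m_1$; the encoder additionally prepares an ancilla register whose measurement at the decoder is designed to reveal the injection subspace without touching the logical information. The secret shared randomness $r_s$ is used only to select the random seed (hash or random-coding parameter) that makes this subspace identification succeed against an adaptive adversary, and I would arrange for $|\mathcal{R}_s|$ to grow sublinearly so that \eqref{eq:aaaa} holds, while the nested-code dimensions give \eqref{eq:bbbb}.

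The second step is to invoke the infidelity bound established for the analysis section, namely $1-F_e^2(\rho_{\mathrm{mix}},\Lambda_n)\le P_{\mathrm{bit}}+P_{\mathrm{phase}}$, where $P_{\mathrm{bit}}$ and $P_{\mathrm{phase}}$ are the error probabilities of the two associated classical decoders. This turns the quantum correctness requirement \eqref{eq:ent_fid_theo} into the task of showing that $n$ times the sum of these two classical error probabilities tends to $0$, uniformly over all $(\mathcal{F},S_n)\in\zeta_{m_0,m_1}^{(n)}$.

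For the classical analysis I would use that, by linearity of the node operations, an attack on $m_a\le m_1$ channels turns the noiseless relation \eqref{H1} into $Y=KX+WZ$ as in \eqref{H2}, with $W$ of rank at most $m_1$; the phase basis obeys a dual relation obtained by the Fourier transform, with its own injection subspace of dimension at most $m_1$. In each basis the decoder first reads the ancilla to recover the relevant injection subspace, then uses the redundancy of the nested code, one block of $m_1$ sacrificed coordinates per basis, to invert $\lefto{K}$ on the logical part and subtract the injection. With the secret seed randomizing the code, a random-coding and hashing argument of the type in \cite{Jaggi2008,HOKC17} shows that the probability of mis-identifying the subspace, and hence both $P_{\mathrm{bit}}$ and $P_{\mathrm{phase}}$, is exponentially small in $n$ and uniform over Eve's strategies; since $m_1<m_0/2$ guarantees $m_0-2m_1>0$, the two sacrificed blocks are affordable and the resulting rate is exactly $m_0-2m_1$.

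I expect the main obstacle to be the measurement-disturbance issue flagged in the code-construction discussion: unlike the classical decoders of \cite{Jaggi2008,HOKC17}, the quantum decoder cannot read the whole network output to locate the injection subspace without collapsing the encoded state. The ancilla construction is what circumvents this, and the delicate point is to design it so that (i) its measurement simultaneously yields the injection subspaces for the bit and phase pictures in a consistent way, and (ii) this measurement extracts no information about, and causes no back-action on, the logical system, which is ultimately what lets the single bound $1-F_e^2\le P_{\mathrm{bit}}+P_{\mathrm{phase}}$ hold. A secondary difficulty is strengthening the classical estimates from mere convergence to the stronger $n(1-F_e^2)\to0$ required in \eqref{eq:ent_fid_theo}, which forces one to establish genuinely exponential, rather than merely vanishing, decay of $P_{\mathrm{bit}}$ and $P_{\mathrm{phase}}$.
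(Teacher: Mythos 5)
Your proposal is correct and follows essentially the same route as the paper: a CSS-type code over an extended alphabet whose infidelity is bounded by the sum of bit and phase error probabilities (via $P_1P_2=\kb{\Phi}$), each of which is then reduced to a classical network-decoding problem of the form $Y=KX+WZ$ in which the randomized ancilla blocks $\cHp_{\mathcal{A}}$ and $\cHp_{\mathcal{B}}$ are measured in the bit and phase bases respectively to identify the injection subspace without disturbing the logical register, with the negligible shared randomness seeding the Vandermonde-type mixing matrices. The only point where you misjudge the mechanism is the last one: the paper does not (and need not) prove exponential decay of $P_{\mathrm{bit}}$ and $P_{\mathrm{phase}}$; instead it lets the extension degree grow as $\alpha_{\ell}\sim 5\log_q\ell$, so that the polynomial bound $O\big(\max\{1/\Q,\,(n_\ell')^{m_0}/(\Q)^{m_0-m_1}\}\big)$ with $\Q=q^{\alpha_\ell}$ already decays faster than $1/n_{\ell}$, which suffices for \eqref{eq:ent_fid_theo}.
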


Notice that this code depends only on the rates $m_0$ and $m_1$,
and does not depend on the detailed structure $\mathcal{F}$ of the network.
Section~\ref{sec:protocol} gives the code realizing the performance mentioned in Theorem~\ref{theo:SQNC_protocol}. 
Sections~\ref{sec:analysis} and \ref{sec:errors} prove that the code in Section~\ref{sec:protocol} satisfies the performance mentioned in Theorem \ref{theo:SQNC_protocol}.
Section~\ref{sec:secrecy} shows that the condition \eqref{eq:ent_fid_theo} implies the secrecy of the code,
by using the result of \cite{Schumacher96}.


\begin{table}
\renewcommand{\arraystretch}{1.2}
    \centering
\caption{Comparison of quantum codes for $m_0$ parallel channels}   \label{tab:comparison}
\begin{tabular}[ht]{|c||c|c|c|}
\hline
    &   Quantum MDS code \cite{Rain99}    &     Our code   \\
\hline
\hline
Use of network &   one-shot    &     {asymptotically many} \\
\hline
Error probability   &   zero-error     &   vanishing error    \\
\hline
Range of $m_1$
&   $m_1 < m_0/4$   &   $m_1 < m_0/2$   \\
\hline
Rate    &   $m_0-4m_1$  &      $m_0-2m_1$    \\
\hline
\multicolumn{3}{l}{$m_0$: number of parallel channels.}\\[-0.4em]
\multicolumn{3}{l}{$m_1$: maximum number of corrupted channels.}
\end{tabular}
\end{table}

Indeed, it is known that
there exists a classical network code which transmits
classical information securely when the number of attacked channels is 
less than a half of the transmission rate from the sender to
the receiver \cite{YSJL14}.
Although Theorem~\ref{theo:SQNC_protocol} requires secure transmission of classical information with negligible rate in order for shared randomness,
the result \cite{YSJL14} implies that
such secure transmission can be realized by using our quantum network in bit basis states with the negligible number of times.
Hence, as shown in Section~\ref{sec:SQNC}, the combination of the result \cite{YSJL14} and 
Theorem~\ref{theo:SQNC_protocol} yields the following theorem.

\begin{theo}[Quantum Network Code without Classical Communication] \label{theo:SQNC_protocol2}
When $m_1 < m_0/2$,
{there exist a sequence $\{n_{\ell}\}_{\ell=1}^{\infty}$ with $n_{\ell} \to \infty$ as $l\to \infty$ and}
a sequence $\{\Code_{n_{\ell}}\}_{\ell=1}^{\infty}$ of quantum network codes of block-lengths $n_{\ell}$
such that 
\begin{align}
&|\mathcal{R}_s| = 0,\label{CX}\\
&\lim_{\ell\to\infty} \frac{ \log_q \dim \cH_{\mathrm{code}}^{(n_{\ell})}}{n_{\ell}} = m_0-2m_1,
\label{AX}\\
&\lim_{\ell\to\infty} \max_{(\mathcal{F}, S_{n_{\ell}})} n_{\ell}(1-F_e^2(\rho_{\mathrm{mix}},\Lambda_{n_{\ell}}) ) = 0,
        \label{BX}
\end{align}
where 
$\Lambda_{n_{\ell}}:= \Lambda[\mathsf{C}_{n_{\ell}},\mathcal{F}^{n_{\ell}}, S_{n_{\ell}}]$,
and
{the maximum is taken 
with respect to $(\mathcal{F}, S_{n_{\ell}})$ in $\zeta_{m_0,m_1}^{(n_{\ell})}$ which is defined in \eqref{eq:set_network_operations}.}
\end{theo}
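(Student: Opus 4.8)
The plan is to bootstrap Theorem~\ref{theo:SQNC_protocol} by replacing the \emph{assumed} secret shared randomness with randomness that the sender secretly transmits to the receiver through the quantum network itself, using only a vanishing fraction of the total network uses. Concretely, I would split the network uses into two consecutive phases: a first phase of block-length $n'_\ell$ in which the sender generates the parameter $r_s$ locally (uniformly on $\mathcal{R}_s$) and delivers it secretly to the receiver, and a second phase of block-length $n_\ell$ in which both parties run the code of Theorem~\ref{theo:SQNC_protocol} with the now-shared $r_s$. Since the two phases occupy disjoint network uses they do not interfere, and the second phase is literally the protocol of Section~\ref{sec:protocol}. The total block-length is $N_\ell := n'_\ell + n_\ell$, and the sequence $\{N_\ell\}$ plays the role of the block-length sequence in the statement; the value $|\mathcal{R}_s|=0$ in \eqref{CX} reflects that the randomness is no longer assumed but generated and transmitted inside the protocol.

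For the first phase I would invoke the classical secure network code of \cite{YSJL14}. When $m_1<m_0/2$, this code transmits classical information over a network with at most $m_1$ adversarially corrupted channels so that the receiver recovers the message correctly and the adversary learns nothing about it, \emph{without} any pre-shared randomness. The crucial observation is that, restricted to the bit basis $\px{|X\rangle_b}$, the quantum network acts exactly as the classical network of Section~\ref{sec:model}: each node unitary $\lefto{A_t}$ permutes bit-basis states according to $A_t$, so encoding $r_s$ into bit-basis states, transmitting, and measuring in the bit basis at the sink realizes the classical relation \eqref{H2} on which \cite{YSJL14} operates. Because Theorem~\ref{theo:SQNC_protocol} needs only $\log_q|\mathcal{R}_s|=o(n_\ell)$ symbols and \cite{YSJL14} transmits at a positive rate whenever $m_1<m_0/2$, the number of uses obeys $n'_\ell=o(n_\ell)$, and I retain the freedom to enlarge $n'_\ell$ (keeping it $o(n_\ell)$) by transmitting $r_s$ with extra redundancy.

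Given the two phases, the rate claim \eqref{AX} follows from $N_\ell=n_\ell(1+o(1))$: the code-space dimension is inherited unchanged from Theorem~\ref{theo:SQNC_protocol}, while the block-length grows only by the negligible $n'_\ell$, so $\log_q\dim\cH_{\mathrm{code}}^{(N_\ell)}/N_\ell\to m_0-2m_1$. For the fidelity \eqref{BX} I would condition on the event that the first phase delivers the correct $r_s$. On that event the overall channel is exactly the averaged protocol of Theorem~\ref{theo:SQNC_protocol} with a genuinely secret, correctly shared $r_s$: secret because \cite{YSJL14} leaks nothing about $r_s$ even to an adaptive adversary, so Eve's first-phase view is independent of $r_s$ and her second-phase attack is just another admissible strategy in $\zeta_{m_0,m_1}^{(n_\ell)}$. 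Hence the conditional fidelity defect is the $\epsilon_\ell=o(1/n_\ell)$ guaranteed by \eqref{eq:ent_fid_theo}. Writing $\delta_\ell$ for the first-phase failure probability and using linearity of the entanglement fidelity in the channel, $1-F_e^2(\rho_{\mathrm{mix}},\Lambda_{N_\ell})\le \delta_\ell+\epsilon_\ell$, uniformly over $(\mathcal{F},S_{N_\ell})\in\zeta_{m_0,m_1}^{(N_\ell)}$; secrecy of the final code then follows from this high fidelity via \cite{Schumacher96}, exactly as for Theorem~\ref{theo:SQNC_protocol}.

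The main obstacle is the quantitative matching in this last step: I need $N_\ell(1-F_e^2)\le N_\ell(\delta_\ell+\epsilon_\ell)\to 0$, which forces $\delta_\ell=o(1/n_\ell)$ rather than merely $\delta_\ell\to 0$. This is where the exponential decay of the first-phase error in its own block-length is essential: if the correctable code of \cite{YSJL14} has failure probability decaying like $e^{-cn'_\ell}$, then choosing $n'_\ell$ to grow faster than $\log n_\ell$ (for instance $n'_\ell=\lceil(\log n_\ell)^2\rceil$) makes $\delta_\ell=o(1/n_\ell)$ while keeping $n'_\ell=o(n_\ell)$. The technical heart of the argument is therefore to confirm that \cite{YSJL14} indeed supplies both this exponential correctability and a perfect (or asymptotically perfect, with controllable leakage) secrecy, and to check that any residual bias in the conditional distribution of $r_s$ given success, as well as any residual leakage, contributes only $O(\delta_\ell)$ and does not violate the secrecy hypothesis under which Theorem~\ref{theo:SQNC_protocol} was established.
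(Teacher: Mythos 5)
Your proposal follows essentially the same route as the paper: realize the classical secure network code of \cite{YSJL14} in the bit basis as a short preparation phase that secretly delivers $r_s$, then run the code of Theorem~\ref{theo:SQNC_protocol} on the remaining (dominant) portion of the block-length, with the rate and fidelity conditions inherited from that theorem. You are in fact slightly more careful than the printed proof on the quantitative point that the preparation-phase failure probability must be $o(1/n_{\ell})$ for \eqref{BX} (the paper's choice $\beta=\lfloor 2\log_q\log_2\ell\rfloor$ only yields $P_{\mathrm{err}}\to 0$), and the fix is exactly as you suggest: enlarge the redundancy parameter while keeping the preparation phase of negligible length.
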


\subsection{Comparison our code with quantum error-correcting code and quantum secret sharing}

%

To compare with existing results, we consider the special case where the network consists of $m_0$ parallel channels.
The quantum maximum distance separable (MDS) code \cite{Rain99} of length $m_0$ works in this network even for the one-shot setting which means one use of the network.
When $m_1 < m_0/4$ and at most $m_1$ channels are corrupted, 
the code has the rate $m_0 - 4m_1$ and the error is zero.
On the other hand, 
our code works with $n$ uses of the same network, and the position of $m_1$ corrupted channels
is assumed to be fixed over all network uses. 
Then, when $m_1 < m_0/2$ and at most $m_1$ channels are corrupted, 
our code has the rate $m_0 - 2m_1$ and 
the error goes to zero as the number $n$ of network use goes to infinity.

On the other hand, our code has an advantage that it can be used in any networks defined in Section \ref{sec:model} without any modification of the code,
whereas the quantum MDS code \cite{Rain99} works only in the network with $m_0$ parallel channels.

%
%
%
Our code applied for $m_0$ parallel channels can be thought of as an error-tolerant quantum secret sharing \cite{CGS05}.
In error-tolerant quantum secret sharing, 
a sender encodes a secret to $m_0$ shares and distributes the shares to $m_0$ players,
and all players send their shares to the receiver.
If $m_0-m_1$ players are honest,
even if the other $m_1$ players send maliciously corrupted shares,
the receiver can recover the secret and the secret is not leaked to the malicious players.
Our code implements this task if the majority of players are honest, i.e., $m_1 < m_0/2$, which is the same for the error-tolerant quantum secret sharing scheme in \cite{CGS05}.

\section{Preliminaries} \label{sec:prelim}
In this section, we prepare definitions and notations which are necessary for our code construction in Section \ref{sec:protocol}.
{In the remainder of this paper, we assume $m_a \leq m_1 < m_0/2$.}

\subsection{Phase basis} \label{sec:notations}
Let $q=s^t$ for a prime number $s$ and a positive integer $t$.
In the construction of our code, we will discuss operations on the phase basis $\{ |z\rangle_p \}_{z\in\mathbb{F}_q}$ which is defined as \cite[Section 8.1.2]{Haya2}
\begin{align*}
|z\rangle_p := \frac{1}{\sqrt{q}} \sum_{x\in\mathbb{F}_q} \omega^{-\tr (xz)} |x\rangle_b
\end{align*}
for $\omega := \exp(2\pi i/s)$ and $\tr y:= \Tr M_y$ ($\forall y\in\mathbb{F}_q$).
Here, the matrix $M_y\in\mathbb{F}_s^{t\times t}$ is the multiplication matrix $x\in\mathbb{F}_q \mapsto yx \in\mathbb{F}_q$ where the finite field $\mathbb{F}_q$ is identified with the vector space $\mathbb{F}_s^t$.

The following Lemma \ref{lemm:invertible_to_unitary}
describes the application of the unitaries $\lefto{A}$ and $\righto{A}$, defined in \eqref{eq:chops}, to the phase basis states, and is proved in Appendix A.
\begin{lemm}\label{lemm:invertible_to_unitary}
For any $Z\in\mathbb{F}_q^{m\times n}$ and any invertible matrices $A \in \mathbb{F}_q^{m\times m}$ and $B \in \mathbb{F}_q^{n\times n}$,
we have
\begin{align}
\lefto{A} |Z\rangle_p = | (A^{\top})^{-1}Z \rangle_p,\quad\!\!\!
\righto{B}|Z\rangle_p = | Z(B^{\top})^{-1} \rangle_p. 
\end{align}
\end{lemm}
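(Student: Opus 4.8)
The plan is to expand $|Z\rangle_p$ in the bit basis, push the channel unitary through the expansion term by term, and then perform an invertible change of summation variable that transfers the matrix from the bit label to the phase label. First I would record the matrix version of the phase basis: extending the single-system definition entrywise over the $m\times n$ register gives
\begin{align*}
|Z\rangle_p = \frac{1}{\sqrt{q^{mn}}}\sum_{X\in\mathbb{F}_q^{m\times n}} \omega^{-\langle X, Z\rangle}\,|X\rangle_b,
\qquad \langle X, Z\rangle := \sum_{i,j}\tr(X_{ij}Z_{ij}),
\end{align*}
where $\langle\cdot,\cdot\rangle$ is the $\mathbb{F}_s$-valued pairing obtained by composing entrywise multiplication in $\mathbb{F}_q$ with the field trace $\tr$. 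The two identities are analogous, so I would carry out $\lefto{A}$ in full and then indicate the changes needed for $\righto{B}$.

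For $\lefto{A}$, applying the definition \eqref{eq:chops} to each summand gives $\lefto{A}|Z\rangle_p = q^{-mn/2}\sum_X \omega^{-\langle X,Z\rangle}|AX\rangle_b$. Since $A$ is invertible, the substitution $Y:=AX$ is a bijection of $\mathbb{F}_q^{m\times n}$, and the sum becomes $q^{-mn/2}\sum_Y \omega^{-\langle A^{-1}Y, Z\rangle}|Y\rangle_b$. It then suffices to prove the pairing identity $\langle A^{-1}Y, Z\rangle = \langle Y, (A^\top)^{-1}Z\rangle$, since the right-hand side is exactly the bit-basis expansion of $|(A^\top)^{-1}Z\rangle_p$. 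Expanding both sides into index sums, both reduce to $\sum_{i,j,k}\tr\!\big((A^{-1})_{ik}Y_{kj}Z_{ij}\big)$, which closes the argument; the transpose on $A$ appears precisely because the contracted row index is relabelled when one passes from $A^{-1}Y$ to $(A^\top)^{-1}Z$. The case of $\righto{B}$ is identical with $Y:=XB$ and the companion identity $\langle YB^{-1},Z\rangle = \langle Y, Z(B^\top)^{-1}\rangle$, the transpose now emerging on the column index.

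The computation itself is elementary, so the only delicate point is the bookkeeping in the pairing identity, and in particular avoiding the tempting but false step of factoring an entry of $A^{-1}$ out of $\tr$ as an $\mathbb{F}_q$-scalar: the field trace is only $\mathbb{F}_s$-linear, not $\mathbb{F}_q$-linear. What makes the argument go through is that one never needs to pull a field scalar through $\tr$; it is enough to use additivity of $\tr$ to split the sum over the contracted index, together with commutativity of the product inside $\tr$ to match the two index expressions. I expect this to be the step requiring the most care, whereas the bijectivity of $X\mapsto AX$ (resp.\ $X\mapsto XB$) and the recognition of the resulting sum as a phase-basis vector are immediate.
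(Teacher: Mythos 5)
Your proposal is correct and follows essentially the same route as the paper's Appendix A proof: expand $|Z\rangle_p$ in the bit basis, substitute $Y=AX$ (resp.\ $Y=XB$), and move the matrix across the $\tr$-pairing via the transpose identity. The only cosmetic difference is that the paper factorizes $\lefto{A}$ as $(\mathsf{L}^{(1)}(A))^{\otimes n}$ and argues column-by-column with the vector inner product $(Tx,y)=(x,T^{\top}y)$, whereas you handle the full matrix pairing in one pass; your caution about the trace being only $\mathbb{F}_s$-linear is well placed and does not affect the argument, since only additivity and commutativity are used.
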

For convenience, we use notation $[A]_p := (A^{-1})^{\top} = (A^{\top})^{-1} $ for any invertible matrix $A$.

\subsection{Block-lengths and extended quantum system in our code} \label{subsec:extended}

First, we define the sequence $\{n_{\ell}\}_{\ell=1}^{\infty}$ of block-lengths.
For any positive integer $\ell$,
define four parameters
\begin{gather}
\alpha_{\ell} := \max\{\floor*{ 5\log_q \ell },1\},\quad
n_\ell' := \floor*{\frac{\ell}{\alpha_{\ell}}}, \nonumber \\
n_{\ell} := \alpha_{\ell} n_\ell',\quad
q' := q^{\alpha_{\ell}}. \label{def:block-length}
\end{gather}
Then, we have
\begin{gather}
\lim_{\ell\to\infty} \frac{n_{\ell}\cdot (n_{\ell}')^{m_0}}{(\Q)^{m_0-m_1}} = 0,    \label{cond:nq}
\end{gather}
because
\begin{align*}
&\frac{n_{\ell}\cdot (n_{\ell}')^{m_0}}{(\Q)^{m_0-m_1}}  \leq \frac{\ell^{1+m_0}}{q^{(5\log_q \ell-1)(m_0-m_1)}}\\
&\leq \frac{\ell^{1+5m_1-4m_0}}{q^{m_1-m_0}} \leq \frac{\ell^{1-1.5m_0}}{q^{m_1-m_0}} \to 0.
\end{align*}
In the following, 
we construct our code only for any sufficiently large $\ell$ such that the condition 
\begin{align}
n_{\ell}' \geq 3m_0  \label{cond:nprime}
\end{align}
holds, which is enough to discuss the asymptotic performance of the code.

In our code, an extended quantum system $\cHp:=\cH^{\otimes \alpha_\ell}$ is the unit quantum system for encoding and decoding operations.
We identify the system $\mathcal{H}'$
with the system spanned by $\px{|x\rangle_b \mid x \in \FF_{\Q}}$.
Then, {$n_\ell$} uses of the network over $\cH$ can be regarded as
$n_\ell'$ uses of the network over $\mathcal{H}'$.
For invertible matrices $A \in \mathbb{F}_{\Q}^{m\times m}$ and $B \in \mathbb{F}_{\Q}^{n\times n}$,
two unitaries $\leftop{A}$ and $\rightop{B}$ are defined, similarly to \eqref{eq:chops}, as
\begin{align*}
\leftop{A} &:= \!\!\!\sum_{X\in\FF_{\Q}^{m \times n}}\ket{AX}_{bb}\bra{X}, \
 \!\p\! 
 \rightop{B} :=\!\!\! \sum_{X\in\FF_{\Q}^{m \times n}}\ket{XB}_{bb}\bra{X},
\end{align*}
and similarly to Lemma \ref{lemm:invertible_to_unitary},
for any $Z\in\mathbb{F}_{q'}^{m\times n}$, we have 
\begin{align*}
\leftop{A} |Z\rangle_p = | (A^{\top})^{-1}Z \rangle_p,\quad
\rightop{B}|Z\rangle_p = | Z(B^{\top})^{-1} \rangle_p. 
\end{align*}

\subsection{Notations for quantum systems and states} \label{subsec:sys}

In this subsection, we introduce several notations for quantum states and systems.
For the quantum system $\cH^{\otimes m_0\times n_\ell}= (\cHp)^{\otimes m_0\times n_\ell'}$ which is transmitted by $n_\ell$ uses of the network,
we use the following notation:
\begin{align*}
                        &(\cHp)^{\otimes m_0\times n_\ell'} \!\!= \cHp_\mathcal{A}\otimes \cHp_\mathcal{B} \otimes \cHp_\mathcal{C}\\
                        &:= (\cHp)^{\otimes m_0\!\times\! \mm}\!\otimes\! (\cHp)^{\otimes m_0\!\times\! \mm} \!\otimes\! (\cHp)^{\otimes m_0\!\times \!(n_\ell'\!-\!2\mm\!)}.
\end{align*}
Moreover, for any $\mathcal{X}\in\{\mathcal{A},\mathcal{B},\mathcal{C}\}$ and $(m_\mathcal{A}, m_\mathcal{B},m_\mathcal{C}):= (m_0,m_0,n_\ell'-2m_0)$, we denote
\begin{align*}
    &\cHp_\mathcal{X} = \cHp_{\mathcal{X}1}\otimes \cHp_{\mathcal{X}2}\otimes \cHp_{\mathcal{X}3}   \\
       &:= (\cHp)^{\otimes m_1\times m_\mathcal{X}}\otimes (\cHp)^{\otimes (m_0-2m_1)\times m_\mathcal{X}}\otimes (\cHp)^{\otimes m_1\times m_\mathcal{X}}.
\end{align*}
The tensor product state of $|\phi\rangle\in \cHp_{\mathcal{X}1}, 
                |\psi\rangle\in \cHp_{\mathcal{X}2}$, and 
                $|\varphi\rangle\in \cHp_{\mathcal{X}3}$ is denoted as
\begin{align*}
    \begin{bmatrix}
    |\phi\rangle \\ 
    |\psi\rangle \\ 
    |\varphi\rangle \\
    \end{bmatrix}
    :=
    |\phi\rangle 
    \otimes |\psi\rangle 
    \otimes |\varphi\rangle
    \in
    \cHp_\mathcal{X}.
\end{align*}
For any {block matrix}
$[X^{\top}, Y^{\top}, Z^{\top}]^{\top} \in \mathbb{F}_q^{m_1\times m_\mathcal{X}}\times \mathbb{F}_q^{(m_0-2m_1)\times m_\mathcal{X}}\times \mathbb{F}_q^{m_1\times m_\mathcal{X}}$,
the bit and phase basis states of $[X^{\top}, Y^{\top}, Z^{\top}]^{\top}$ are denoted by
\begin{align*}
    \ket*{
    \begin{bmatrix}
    X \\ 
    Y \\ 
    Z \\
    \end{bmatrix}}_{\!\!b}
    := 
    \begin{bmatrix}
    \ket{X}_b \\ 
    |Y\rangle_b \\ 
    |Z\rangle_b \\
    \end{bmatrix}
    ,
    \quad
    \ket*{
    \begin{bmatrix}
    X \\ 
    Y \\ 
    Z \\
    \end{bmatrix}
    }_{\!\!p}
    := 
    \begin{bmatrix}
    |X\rangle_p \\ 
    |Y\rangle_p \\ 
    |Z\rangle_p \\
    \end{bmatrix}
    .
\end{align*}
{The $k\times l$ zero matrix is denoted by $\zerom{k}{l}$, and $|i,j\rangle := |i\rangle \otimes |j\rangle$.}

\subsection{CSS code in our quantum network code} \label{sec:css_code}

In this subsection, we define a {Calderbank--Steane--Shor (CSS) code \cite{Steane96,CS96,Steane96-2}} which is used in the construction of our quantum network code in Section~\ref{sec:protocol}.
A CSS code is defined from two classical codes $C_1$ and $C_2$ satisfying $C_1 \supset C_2^{\perp}$, where a classical code is defined as the set of codewords.
Therefore, in order to define the CSS code used in our code, we define the following two classical codes:
{by identifying the set $\mathbb{F}_{\Q}^{m_0\times(n_\ell'\!-\!2\mm\!)}$ of matrices with the vector space $\mathbb{F}_{\Q}^{m_0(n_\ell'\!-\!2\mm\!)}$,}
the classical codes $C_1, C_2\subset \mathbb{F}_{\Q}^{m_0\times (n_\ell'\!-\!2\mm\!)}$ are defined by 
\begin{align*}
C_1 :=
    \mleft\{
     \!\!   \vphantom{     \begin{bmatrix}
                        \zerom{m_1}{n_\ell'\!-\!2\mm\!} \\ 
                        Y \\ 
                        Z \\
                        \end{bmatrix}
                        }
    \mright.
    &
    \mleft.
    \begin{bmatrix}
    \zerom{m_1}{n_\ell'\!-\!2\mm\!} \\ 
    Y \\ 
    Z \\
    \end{bmatrix}
    \in \mathbb{F}_{\Q}^{m_0\times(n_\ell'\!-\!2\mm\!)}
    \;\middle|\;
    \mright.
    \\
    &\qquad
    \mleft. \!\!\!\vphantom{     \begin{bmatrix}
                        \zerom{m_1}{n_\ell'\!-\!2\mm\!} \\ 
                        Y \\ 
                        Z \\
                        \end{bmatrix}
                        }
        Y\in \mathbb{F}_{\Q}^{(m_0-2m_1)\times(n_\ell'\!-\!2\mm\!)},\
        Z\in \mathbb{F}_{\Q}^{m_1\times(n_\ell'\!-\!2\mm\!)}
    \mright\},\\
C_2 := 
    \mleft\{
      \!\!  \vphantom{     \begin{bmatrix}
                        \zerom{m_1}{n_\ell'\!-\!2\mm\!} \\ 
                        Y \\ 
                        Z \\
                        \end{bmatrix}
                        }
    \mright.
    &\mleft.
    \begin{bmatrix}
    X \\ 
    Y \\ 
    \zerom{m_1}{n_\ell'\!-\!2\mm\!} \\
    \end{bmatrix}
    \in \mathbb{F}_{\Q}^{m_0\times(n_\ell'\!-\!2\mm\!)}
    \;\middle|\;
    \mright.
    \\
    &\qquad
    \mleft. \!\!\!\vphantom{     \begin{bmatrix}
                        \zerom{m_1}{n_\ell'\!-\!2\mm\!} \\ 
                        Y \\ 
                        Z \\
                        \end{bmatrix}
                        }
        X \in \mathbb{F}_{\Q}^{m_1\times(n_\ell'\!-\!2\mm\!)},\ 
        Y \in \mathbb{F}_{\Q}^{(m_0-2m_1)\times(n_\ell'\!-\!2\mm\!)}
    \mright\}.
\end{align*}
The classical codes $C_1$ and $C_2$ satisfy $C_1 \supset C_2^{\perp} = \{ [\zerom{m_1}{n_\ell'\!-\!2m_0\!}^{\top}, \zerom{m_0-2m_1}{n_\ell'\!-\!2m_0\!}^{\top}, Z^{\top}]^{\top} \mid Z\in \mathbb{F}_{q'}^{m_1\times(n_\ell'\!-\!2m_0\!)}\}$.
For any coset $M+C_2^{\perp} \in C_1/C_2^{\perp}$ 
containing $M\in\mathbb{F}_{\Q}^{ (m_0\!-\!2m_1)\times (n_\ell'\!-\!2\mm)}$,
define a quantum state $|M+C_2^{\perp}\rangle_b \in \cHp_\mathcal{C}$ by
\begin{align}
    |M+C_2^{\perp}\rangle_b &:=
    \frac{1}{\sqrt{|C_2^{\perp}|}} 
    \sum_{J\in C_2^{\perp}}
    \left|
    \begin{bmatrix}
    \zerom{m_1}{n_\ell'\!-\!2\mm\!}\\ 
    M \\ 
    \zerom{m_1}{n_\ell'\!-\!2\mm\!} \\
    \end{bmatrix}
    +J \right\rangle_{\!\!b}  \nonumber\\
    &=
    \begin{bmatrix}
    |\zerom{m_1}{n_\ell'\!-\!2\mm\!} \rangle_b \\ 
    |M\rangle_b \\ 
    |\zerom{m_1}{n_\ell'\!-\!2\mm\!} \rangle_p \\
    \end{bmatrix}. \nonumber
\end{align}
Then, the CSS code is defined as 
$\mathsf{CSS}(C_1,C_2):=\{ |M+C_2^{\perp}\rangle_b \mid M\in\mathbb{F}_{\Q}^{ (m_0\!-\!2m_1)\times (n_\ell'\!-\!2\mm)}\}$.
That is, 
any state $|\phi\rangle 
\in \cH_{\mathrm{code}}^{(n_\ell)} := \cHp_{\mathcal{C}2} = (\cHp)^{\otimes (m_0-2m_1)\times (n_\ell'\!-\!2\mm\!)}$
is encoded as
\begin{align}
    \begin{bmatrix}
    |\zerom{m_1}{n_\ell'\!-\!2\mm\!} \rangle_b \\ 
    |\phi\rangle \\ 
    |\zerom{m_1}{n_\ell'\!-\!2\mm\!} \rangle_p \\
    \end{bmatrix}
    \in 
    {\text{span}\ \!\mathsf{CSS}(C_1,C_2)} 
    \subset \cHp_\mathcal{C}. \nonumber
\end{align}
The above CSS code is used in our code construction.

{
\subsection{Other Notations}    \label{subsec:other_notations}

In correspondence with the notations in Section \ref{subsec:sys},
for any positive integer $k$ and any matrix $X\in\mathbb{F}_{\Q}^{k \times n_\ell'}$, we denote 
\begin{align*}
X = [X^\mathcal{A},X^\mathcal{B},X^\mathcal{C}] \in \mathbb{F}_{\Q}^{k\times \mm}\times \mathbb{F}_{\Q}^{k\times \mm}\times \mathbb{F}_{\Q}^{k\times (n_\ell'\!-\!2\mm\!)}.
\end{align*}
{If $k=m_0$, 
for any $\mathcal{X}\in\{\mathcal{A},\mathcal{B},\mathcal{C}\}$,
we denote
$X^{\mathcal{X}} = [ (X^{\mathcal{X}1})^{\top}, (X^{\mathcal{X}2})^{\top}, (X^{\mathcal{X}3})^{\top}]^{\top}$,
where $X^{\mathcal{X}1},X^{\mathcal{X}3}\in \mathbb{F}_{\Q}^{m_1\times m_{0}}$ and $X^{\mathcal{X}2}\in \mathbb{F}_{\Q}^{(m_0-2m_1)\times (n_{\ell}'-2m_0})$.}

{$\pr_{R}[A(R)]$ denotes the probability that the random variable $R$ satisfies the condition $A$,}
and
$\pr_R[A(R)|B(R)]$ denotes the conditional probability that the variable $R$ satisfies the condition $A$ under the condition $B$.
}

\section{Code Construction with Negligible Rate Secret Shared Randomness} \label{sec:protocol}

Now, we describe our quantum network code with the secret shared randomness of negligible rate by $n_\ell$ network uses.

In our code, the encoder and decoder are determined depending on secret randomnesses. 
Let $\mathcal{R}_e$ be the set of $m_0\times m_0$ invertible matrices over $\mathbb{F}_{q'}$,
$\mathcal{R}_1$ be the finite field $\mathbb{F}_{q'}$,
and $\mathcal{R}_2$ be the set of $(m_0-m_1)\times m_0$ matrices over $\mathbb{F}_{q'}$ of rank $m_0-m_1$.
The private randomness $R_e$ of the encoder is uniformly chosen from $\mathcal{R}_e$.
The secret shared randomness $R_s := (\vsec,R_2):= ((\vsec_1,\ldots, \vsec_{4m_0}), (R_{2,b},R_{2,p}))$ between the encoder and decoder is uniformly chosen from $\mathcal{R}_s := \mathcal{R}_1^{4m_0} \times \mathcal{R}_2^{2}$.
Note that the size of the shared secret randomness $R_s$ is 
{less than $\log_q |\mathbb{F}_{\Q}^{4m_0}\times\mathbb{F}_{\Q}^{2(m_0-m_1)\times m_0}| = \alpha_{\ell}(2m_0^2+(4-2m_1)m_0)$ 
and therefore} negligible with respect to $n_\ell$.

The code space is $\cH_{\mathrm{code}}^{(n_\ell)}:=\cHp_{\mathcal{C}2} = (\cHp)^{\otimes (m_0-2m_1)\times (n_\ell'\!-\!2\mm\!)}$ which is the code space of the CSS code defined in Section \ref{sec:css_code}.
The encoder $\mathsf{E}^{(n_\ell)}_{R_e,R_s}$ is defined depending on $R_e$ and $R_s$
as an isometry quantum channel from $\cH_{\mathrm{code}}^{(n_\ell)}$ to $\cH^{\otimes m_0\times n_\ell}$,
and
the decoder $\mathsf{D}^{(n_\ell)}_{R_s}$ is defined depending on $R_s$
as a TP-CP map from $\cH^{\otimes m_0\times n_\ell}$ to $\cH_{\mathrm{code}}^{(n_\ell)}$.
In the following subsections,
we give the details of the encoder $\mathsf{E}^{(n_\ell)}_{R_e,R_s}$ and the decoder $\mathsf{D}^{(n_\ell)}_{R_s}$.

\subsection{Encoder $\mathsf{E}^{(n_\ell)}_{R_e,R_s}$}
For any input state $|\phi\rangle\in\cH_{\mathrm{code}}^{(n_\ell)}$,
the encoder $\mathsf{E}^{(n_\ell)}_{R_e,R_s}$ is described as follows.

\vspace{-0.5em}
\FrameSep5pt
\begin{framed}
\noindent{\bf Encode 1 (Check Bit Embedding)\quad}
    Encode the input state $|\phi\rangle$ by an isometry map $U_1^{R_2}: \cH_{\mathrm{code}}^{(n_\ell)}\to
    (\cH')^{\otimes m_0\times n_\ell'}  = \cHp_\mathcal{A}\otimes \cHp_\mathcal{B} \otimes \cHp_\mathcal{C}$ which is defined as
\begin{align*}
|\phi_1\rangle \!&:= U_1^{R_2} |\phi\rangle     \\
&=
    \bitk*{
    \left[
    \!\!\!
    \begin{array}{c}
    \zerom{m_1}{\mm} \\
    \multicolumn{1}{c}{\multirow{2}{*}{$R_{2,b}$}}\\
    \multicolumn{1}{c}{} 
    \end{array}
    \!\!\!
    \right]}
    \!\!\otimes
    \phasek*{
    \left[
    \!\!\!
    \begin{array}{c}
    \multicolumn{1}{c}{\multirow{2}{*}{${R_{2,p}}$}} \\
    \multicolumn{1}{c}{} \\
    \zerom{m_1}{\mm} \\
    \end{array}
    \!\!\!
    \right]}
    \!\!\otimes\!\!
    \left[
    \!\!\!
    \begin{array}{c}
    |\zerom{m_1}{n_\ell'\!-\!2\mm\!}\rangle_{\!b}\\
    |\phi\rangle \\
    |\zerom{m_1}{n_\ell'\!-\!2\mm\!}\rangle_{\!p} \\
    \end{array}
    \!\!\!
    \right]\!\!.
\end{align*}

\noindent{\bf Encode 2 (Vertical Mixing)\quad}
{Encode $|\phi_1\rangle$ as} 
    \begin{align*}
    |\phi_2\rangle &:= \leftop{R_e} |\phi_1\rangle \in(\cH')^{\otimes m_0\times n_\ell'} .
    \end{align*}

\noindent{\bf Encode 3 (Horizontal Mixing)\quad}
{From the shared randomness $\vsec$,} define matrices $Q_{1;i,j}:=(\vsec_j)^i$, $Q_{2;i,j}:=(\vsec_{\mm+j})^i$ for $1\leq i \leq n_\ell'\!-\!2\mm$, $1\leq j\leq \mm$,
and $Q_{3;i,j}:=(\vsec_{2\mm+j})^i$, $Q_{4;i,j}:=(\vsec_{3\mm+j})^i$ for $1\leq i \leq \mm$ and $1\leq j\leq \mm$.
With these matrices, define a random matrix $R_1^{\vsec}\in\mathbb{F}_{\Q}^{n_\ell'\times n_\ell'}$ as
\begin{align*}
R_1^{\vsec}:=&
\left[
\!\!
\begin{array}{ccc}
I_{\mm} & \zerom{\mm}{\mm} & \zerom{\mm}{n_\ell'-2\mm} \\
Q_3^{\top}+Q_4 & I_{\mm} & \zerom{\mm}{n_\ell'-2\mm} \\
\zerom{n_\ell'-2\mm}{\mm} & \zerom{n_\ell'-2\mm}{\mm} & I_{n_\ell'-2\mm}
\end{array}
\!\!
\right]
\\
 &\cdot
\left[
\!\!
\begin{array}{ccc}
I_{\mm} & \zerom{\mm}{\mm} & \zerom{\mm}{n_\ell'-2\mm}  \\
\zerom{\mm}{\mm} & I_{\mm} & Q_2^{\top} \\
\zerom{n_\ell'-2\mm}{\mm} & \zerom{n_\ell'-2\mm}{\mm} & I_{n_\ell'-2\mm}
\end{array}
\!\!
\right]
\\
 &\cdot
\left[\!\!
\begin{array}{ccc}
I_{\mm} & \zerom{\mm}{\mm} & \zerom{\mm}{n_\ell'-2\mm}  \\
\zerom{\mm}{\mm} & I_{\mm} & \zerom{\mm}{n_\ell'-2\mm} \\
Q_1 & \zerom{n_\ell'-2\mm}{\mm} & I_{n_\ell'-2\mm}
\end{array}
\!\!\right]
,
\end{align*}
where $I_d$ is the $d$-dimensional identity matrix.

{Encode $|\phi_2\rangle$ as}
    \begin{align*}
    |\phi_3\rangle &:= \rightop{R_1^{\vsec}} |\phi_2\rangle \in (\cH')^{\otimes m_0\times n_\ell'}.
    \end{align*}
\end{framed}
By the above three steps, the encoder  $\mathsf{E}^{(n_\ell)}_{R_e,R_s}$ is written as the isometry map
\begin{align*}
\mathsf{E}^{(n_\ell)}_{R_e,R_s}: |\phi\rangle\mapsto 
\rightop{R_1^{\vsec}}\leftop{R_e}U_1^{R_2} |\phi\rangle  \in \cH^{\otimes m_0\times n_\ell}.
\end{align*}

\subsection{Decoder $\mathsf{D}^{(n_\ell)}_{R_s}$} 
For any input state $|\psi\rangle  \in (\cH')^{\otimes m_0\times n_\ell'} = \cH^{\otimes m_0\times n_\ell}$,
the decoder $\mathsf{D}^{(n_\ell)}_{R_s}$ is described as follows.

\begin{framed}
\noindent{\bf Decode 1 (Decoding of Encode 3)\quad}
The inverse of $R_1^{\vsec}$ is derived from the shared randomness $\vsec$ as
\begin{align*}
\!\!
(R_1^{\vsec})^{-1}
\!
:=&
\!
\left[\!\!
\begin{array}{ccc}
I_{\mm} & \zerom{\mm}{\mm} & \zerom{\mm}{n_\ell'-2\mm} \\
\zerom{\mm}{\mm} & I_{\mm} & \zerom{\mm}{n_\ell'-2\mm} \\
-Q_1 & \zerom{n_\ell'-2\mm}{\mm} & I_{n_\ell'-2\mm}
\end{array}
\!\!\right]
\\
&\cdot
\left[\!\!
\begin{array}{ccc}
I_{\mm} & \zerom{\mm}{\mm} & \zerom{\mm}{n_\ell'-2\mm} \\
\zerom{\mm}{\mm}  & I_{\mm} & -Q_2^{\top} \\
\zerom{n_\ell'-2\mm}{\mm} & \zerom{n_\ell'-2\mm}{\mm} & I_{n_\ell'-2\mm}
\end{array}
\!\!\right]
\\
&\cdot
\left[\!\!
\begin{array}{ccc}
I_{\mm} & \zerom{\mm}{\mm} & \zerom{\mm}{n_\ell'-2\mm} \\
-Q_3^{\top}-Q_4 &  I_{\mm} & \zerom{\mm}{n_\ell'-2\mm} \\
\zerom{n_\ell'-2\mm}{\mm} & \zerom{n_\ell'-2\mm}{\mm} & I_{n_\ell'-2\mm}
\end{array}
\!\!\right]
\!.
\end{align*}
Apply $\rightop{R_1^{\vsec}}^{\dagger} = \rightop{(R_1^{\vsec})^{-1}}$ to the state $|\psi\rangle$:
\begin{align*}
 |\psi_1\rangle 
 &:= \rightop{R_1^{\vsec}}^{\dagger} | \psi \rangle 
 \in (\cH')^{\otimes m_0\times n_\ell'}  = \cHp_\mathcal{A}\otimes \cHp_\mathcal{B} \otimes \cHp_\mathcal{C}.
\end{align*}
\noindent{\bf Decode 2 (Error Correction)\quad}
{Perform the bit basis measurement $\px{ |O_b\rangle_b \mid  O_b\in\mathbb{F}_{\Q}^{m_0\times m_0}}$ on $\cHp_\mathcal{A}$ and the phase basis measurement $\px{ |O_p\rangle_p \mid  O_p\in\mathbb{F}_{\Q}^{m_0\times m_0}}$ on $\cHp_\mathcal{B}$.
The bit and phase measurement outcomes are denoted as $O_b, O_p\in\mathbb{F}_{\Q}^{m_0 \times \mm}$, respectively.}

Next, find invertible matrices $D_b, D_p \in \mathbb{F}_{\Q}^{m_0\times m_0}$ which satisfy
\begin{align}
P_{b} D_b O_b  &=  \begin{bmatrix} \zerom{m_1}{\mm} \\ R_{2,b} \end{bmatrix} , \label{eq:Gauss_bit} \\
P_{p} [D_p]_p O_p  &=  \begin{bmatrix} R_{2,p} \\ \zerom{m_1}{\mm} \end{bmatrix} , \label{eq:Gauss_phase}
\end{align}
where $P_{b}$ is the projection to the last $m_0-m_1$ elements in $\mathbb{F}_{q'}^{m_0}$ and 
 $P_{p}$ is the projection to the first $m_0-m_1$ elements in $\mathbb{F}_{q'}^{m_0}$.
If the invertible matrix $D_b$ or $D_p$ does not exist, the decoder applies no operation and returns the transmission failure.
If $D_b$ or $D_p$ is not unique, the decoder decides $D_b$ or $D_p$ deterministically depending on $O_b,R_{2,b},O_p,R_{2,p}$.
    

    Finally, apply $\leftop{D_b}$ and $\leftop{D_p}$ to the system $\cHp_\mathcal{C}$,
    and 
    output the reduced state on $\cHp_{\mathcal{C}2} = \cH_{\mathrm{code}}^{(n_\ell)}$. 
  
    Decode 2 is summarized as a TP-CP map $\mathsf{D}_2$ from $\cHp_\mathcal{A}\otimes\cHp_\mathcal{B}\otimes\cHp_\mathcal{C}$ to $\cH_{\mathrm{code}}^{(n_\ell)}$ by
    \begin{align*}
    &\mathsf{D}_2(\kb{\psi_1})\\
    &:=
    \Tr_{\mathcal{C}1,\mathcal{C}3}
    \!\!\!\!\!\!
    \sum_{O_{\!b},O_{\!p}\in\mathbb{F}_{\Q}^{m_0\times \mm}}
    \!\!\!\!\!\!
    \mathbf{D}_3^{R_2\!,O_{\!b}\!,O_{\!p}}
    \rho_{O_{\!b}\!,O_{\!p}\!,|\psi_1\rangle}
    (\mathbf{D}_3^{R_2\!,O_{\!b}\!,O_{\!p}})^\dagger
    ,
    \end{align*}
    where the matrix $\rho_{O_{\!b}\!,O_{\!p}\!,|\psi_1\rangle}$ and the unitary $\mathbf{D}_3^{R_2\!,O_{\!b}\!,O_{\!p}}$ are defined as
    \begin{align*}
    \rho_{O_{\!b}\!,O_{\!p}\!,|\psi_1\rangle}
    &:=
    \Tr_{\mathcal{A},\mathcal{B}} 
    \kb{\psi_1}
    (|O_b\rangle_{b}{}_b\langle O_{\!b}|
    \otimes
    |O_p\rangle_{pp}\langle O_{\!p}|
    \otimes I_\mathcal{C}),\\
    \mathbf{D}_3^{R_2\!,O_{\!b}\!,O_{\!p}}
    &:=\leftop{D_p}
      \leftop{D_b}.
    \end{align*}
\end{framed}
By the above two steps, the decoder $\mathsf{D}^{(n_\ell)}_{R_s}$ is written as the TP-CP map
\begin{align*}
\mathsf{D}^{(n_\ell)}_{R_s}(\kb{\psi}) = \mathsf{D}_2 \paren*{\rightop{R_1^{\vsec}}^{\dagger} \kb{\psi}\rightop{R_1^{\vsec}} }.
\end{align*}

The performance of our code will be analyzed in Section~\ref{sec:analysis}.

\section{Analysis of Our Code} \label{sec:analysis}

In this section, we evaluate the performance of the code in Section \ref{sec:protocol}.
That is, we show that the code in Section \ref{sec:protocol} satisfies the conditions \eqref{eq:aaaa}, \eqref{eq:bbbb}, and \eqref{eq:ent_fid_theo} in Theorem \ref{theo:SQNC_protocol}.

First, we evaluate the size of the secret shared randomness and the rate of the code. 
The size of the secret shared randomness $R_s$ is less than $\log_q |\mathbb{F}_{\Q}^{4m_0}\times\mathbb{F}_{\Q}^{2(m_0-m_1)\times m_0}| = \alpha_{\ell}(2m_0^2+(4-2m_1)m_0)$ which does not scale with the block-length $n_\ell$.
Therefore, the secret shared randomness is negligible, i.e., the condition \eqref{eq:aaaa} is satisfied.
Moreover, since the dimension of the code space $\cH_{\mathrm{code}}^{(n_\ell)}$ is $(\Q)^{(m_0-2m_1)(n_\ell'-2m_0)}=q^{(m_0-2m_1)(n_\ell-2m_0\alpha_{\ell} )}$,
the rate of our code is $m_0-2m_1$, i.e., the condition \eqref{eq:bbbb} is satisfied.


Next, we evaluate the correctability of the code.
That is, we show that our code satisfies the condition \eqref{eq:ent_fid_theo}, i.e.,
\begin{align*}
\lim_{\ell\to\infty} \max_{(\mathcal{F},S_{n_\ell})} n_\ell(1-F_e^2(\rho_{\mathrm{mix}},\Lambda_{n_\ell}) ) = 0.
\end{align*}

Recall that the averaged protocol is written in \eqref{avp} as 
\begin{align*}
&\Lambda_{n_\ell}= \Lambda[\mathsf{C}_{{n_\ell}},\mathcal{F}^{{n_\ell}},S_{n_\ell}](\rho)\\
&=  \frac{1}{|\mathcal{R}_s\times\mathcal{R}_e|} \!\sum_{(r_s,r_e)\in\mathcal{R}_s\times\mathcal{R}_e}\!\!\!\!\!\!\! \mathsf{D}_{r_s}^{(n_\ell)} \circ \Gamma[\mathcal{F}^{{n_\ell}},S_{n_\ell}] \circ \mathsf{E}_{r_s,r_e}^{(n_\ell)}(\rho),
\end{align*}
and the entanglement fidelity is written in \eqref{eq:ent_fid} as
\begin{align*}
F_e^2(\rho_{\mathrm{mix}},\Lambda_{n_\ell})=\langle \Phi |\Lambda_{n_\ell}\otimes\iota_R(\kb{\Phi})|\Phi\rangle.
\end{align*}
Here, the maximally entangled state $|\Phi\rangle$ is written as $|\Phi\rangle := \big( 1/{(\Q)}^{m/2}\big)\sum_{x\in\mathbb{F}_{\Q}^{m}} |x, x\rangle_b$ for $m := (m_0-2m_1)(n_\ell'\!-\!2\mm\!)$
since $\cH_{\mathrm{code}}^{(n_\ell)} = (\cH')^{m}$.
The entanglement fidelity is evaluated by 
\begin{align}
 &1-F_e^2(\rho_{\mathrm{mix}}, \Lambda_{n_\ell})\\
 =& 1-\langle \Phi |\Lambda_{n_\ell}\otimes\iota_R(\kb{\Phi})|\Phi\rangle\nonumber\\
 =& \Tr \Lambda_{n_\ell}\otimes\iota_R(\kb{\Phi})(I-P_1P_2) \label{eq:binary_PbPp}\\
 \leq& \Tr \Lambda_{n_\ell}\!\!\otimes\!\iota_{\!R}(\kb{\Phi})(\!I\!\!-\!\!P_1\!) 
    \!+\!\Tr \Lambda_{n_\ell}\!\!\otimes\!\iota_{\!R}(\kb{\Phi})(\!I\!\!-\!\!P_2\!) \!\label{ineq:bit_and_phase_ef}\!
\end{align}
for ${P_1} := \sum_{x\in\mathbb{F}_{\Q}^{m}} |x,x\rangle_{bb}\langle x,x|$ and
$P_2 := \sum_{z\in\mathbb{F}_{\Q}^{m}} |z,\bar{z}\rangle_{pp}\langle z,\bar{z}|$ where $|\bar{z}\rangle_p$ is the {complex conjugate} of $|z\rangle_p$.
The equality of (\ref{eq:binary_PbPp}) holds from $P_1P_2 = \kb{\Phi}$ which is proved in Lemma~\ref{lemm:PBPP}.

The two terms in \eqref{ineq:bit_and_phase_ef} are error probabilities with respect to the bit and phase bases, respectively, in the following sense.
Define {\em the bit error probability of $\Lambda_{n_\ell}$} as the average probability that a bit basis state $|x\rangle_b\in\cH_{\mathrm{code}}^{(n_\ell)}$ is the input state of $\Lambda_{n_\ell}$ but the bit basis measurement outcome on the output state is not $x$.
Since the bit error probability is evaluated as 
\begin{align*}
&\textnormal{(bit error probability)} \\
&= 1 - {\frac{1}{{{(\Q)}^{m}}}}\sum_{x\in\mathbb{F}_{\Q}^{m}} {}_{b}\!\langle x |\Lambda_{n_\ell}\paren*{|x\rangle_{bb}\langle x|}|x\rangle_{b}\\
&= 1 - {\frac{1}{{{(\Q)}^{m}}}}\sum_{x\in\mathbb{F}_{\Q}^{m}} \Tr P_{1}\! \cdot\! (\Lambda_{n_\ell}\!\!\otimes\!\iota_{\!R}(|x\!,\!x\rangle_{bb}\langle x\!,\!x|))\\
&= \Tr \Lambda_{n_\ell}\otimes\iota_R(\kb{\Phi})(I-P_1),
\end{align*}
the bit error probability is equal to the first term of \eqref{ineq:bit_and_phase_ef}.
Similarly, 
the second term $\Tr \Lambda_{n_\ell}\otimes\iota_R(\kb{\Phi})(I-P_2)$ of \eqref{ineq:bit_and_phase_ef} is {\em the phase error probability of $\Lambda_{n_\ell}$} which is the average probability that a phase basis state is the input of $\Lambda_{n_\ell}$ but the phase basis measurement outcome on output is incorrect.
Therefore, we can bound the entanglement fidelity as
\begin{align}
&1-F_e^2(\rho_{\mathrm{mix}}, \Lambda_{n_\ell})    \nonumber \\
&\leq \text{(bit error probability)}  + \text{(phase error probability)} .
\label{eq:entfid_twoerr}
\end{align}

The bit and phase error probabilities of our code are evaluated by the following lemma, which is proved in Section~\ref{sec:errors}.
\begin{lemm} \label{lemm:errorp}
Let $\Code_{n}$ be the quantum network code constructed in Section \ref{sec:protocol}
and 
suppose that the randomness $R_s$ of $\Code_{n}$ is shared secretly between the encoder and decoder.
{For any $(\mathcal{F},S_{n_\ell})\in\zeta_{m_0,m_1}^{(n_\ell)}$ defined in \eqref{eq:set_network_operations}},
the bit and phase error probabilities of $\Lambda[\mathsf{C}_{n_\ell},\mathcal{F}^{n_\ell},S_{n_\ell}]$ are evaluated as 
\begin{align}
\textnormal{(bit error probability)} \!\leq\! O\Big(\! \max \Big\{\frac{1}{\Q} , \frac{(n_\ell')^{m_0}}{(\Q)^{m_0-m_1}} \Big\} \!\Big), \label{biterr}\\
\!\textnormal{(phase error probability)} \!\leq\! O\Big(\! \max \Big\{\frac{1}{\Q} , \frac{(n_\ell')^{m_0}}{(\Q)^{m_0-m_1}} \Big\} \!\Big). \label{phaseerr}
\end{align}
\end{lemm}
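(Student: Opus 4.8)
The plan is to reduce the quantum error analysis to two classical network-coding problems, one in the bit basis and one in the phase basis, exactly as the main idea in Section~\ref{sec:main_result} anticipates. For the bit error probability, I would track a bit basis codeword $|X\rangle_b$ through the encoder $\mathsf{E}^{(n_\ell)}_{R_e,R_s}$, the attacked network operation $\Gamma[\mathcal{F}^{n_\ell},S_{n_\ell}]$, and the decoder $\mathsf{D}^{(n_\ell)}_{R_s}$. The key observation is that, in the bit basis, the network without attack acts as $\leftop{K}$, and under attack the analogue of the classical relation \eqref{H2} holds: the output bit string is $KX$ plus an injected contribution $WZ$ supported on an $m_a$-dimensional column space determined by the $m_a\leq m_1$ attacked channels. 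The check-bit block embedded by $U_1^{R_2}$ in $\cHp_\mathcal{A}$ (the $\zerom{m_1}{\mm}$ padding together with the secret $R_{2,b}$) plays the role of the training information in \cite{Jaggi2008}: measuring $\cHp_\mathcal{A}$ yields an outcome $O_b$ from which the decoder solves \eqref{eq:Gauss_bit} for $D_b$, thereby identifying and inverting the injected subspace. The first main step is thus to argue that, conditioned on the secret randomness $R_s$ being unknown to Eve, the probability that the linear system \eqref{eq:Gauss_bit} fails to pin down the correct correction matrix $D_b$ is governed by the chance that Eve's injected vectors accidentally align with the secret directions, and this collision probability is $O\!\left(\max\{1/\Q,\,(n_\ell')^{m_0}/(\Q)^{m_0-m_1}\}\right)$.

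The phase error probability is handled by the dual argument. Here I would invoke Lemma~\ref{lemm:invertible_to_unitary} and its extended-system version, which show that $\leftop{A}$ acts on phase basis states as $\leftop{A}|Z\rangle_p = |(A^\top)^{-1}Z\rangle_p = |[A]_pZ\rangle_p$, so that the network in the phase basis is again a linear map, now by $[K]_p = (K^{-1})^\top$, and an attack again injects an $m_a$-dimensional subspace. The phase-side check information lives in $\cHp_\mathcal{B}$ (the secret $R_{2,p}$ together with $\zerom{m_1}{\mm}$), and measuring $\cHp_\mathcal{B}$ in the phase basis gives $O_p$, from which the decoder solves \eqref{eq:Gauss_phase} for $D_p$. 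By the same counting as in the bit case, the failure probability is bounded by $O\!\left(\max\{1/\Q,\,(n_\ell')^{m_0}/(\Q)^{m_0-m_1}\}\right)$, yielding \eqref{phaseerr}. The role of the horizontal mixing $\rightop{R_1^{\vsec}}$ (Encode~3), built from the Vandermonde-type matrices $Q_{k;i,j}=(\vsec)^i$, is to randomize how the injected noise spreads across the $n_\ell'$ columns so that, with high probability over $\vsec$, the injection does not collide with the payload block $\cHp_{\mathcal{C}2}$; I would make this precise by bounding the relevant rank-deficiency or alignment event using the fact that a nonzero polynomial of degree at most $n_\ell'$ over $\mathbb{F}_\Q$ has at most $n_\ell'$ roots, which is the source of the $(n_\ell')^{m_0}$ numerator.

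Concretely, the sequence of steps I would carry out is: (i) fix a basis codeword and compute the explicit form of the state just before decoding, expressing the network output as codeword-plus-injection in each basis using \eqref{H2} and Lemma~\ref{lemm:invertible_to_unitary}; (ii) identify the decoding success event with the solvability and correctness of \eqref{eq:Gauss_bit} (resp.\ \eqref{eq:Gauss_phase}), namely that $D_b$ (resp.\ $D_p$) both exists and maps the payload block correctly; (iii) show that, averaged over the secret randomness $R_s=(\vsec,R_2)$ and the private $R_e$, the complementary failure event is a union of collision/rank events each of probability $O(1/\Q)$ per attacked channel, with at most $(n_\ell')^{m_0}$ combinatorial configurations arising from the horizontal spreading, giving the stated bound; and (iv) note that the bound then feeds into \eqref{eq:entfid_twoerr}, and combined with \eqref{cond:nq} forces $n_\ell(1-F_e^2)\to 0$ for \eqref{eq:ent_fid_theo}. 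The main obstacle I anticipate is step~(iii): carefully decoupling the adaptive adversary from the secret randomness. Because Eve may choose $Z$ depending on her wiretapped information and adaptively on measurement outcomes, I must argue that the relevant alignment probabilities are computed with respect to randomness $R_s$ that is information-theoretically hidden from Eve, so that her adaptive choices cannot reduce the collision gap; this is where the secrecy of $R_s$ and the independence structure of the two check blocks $\cHp_\mathcal{A},\cHp_\mathcal{B}$ must be used most delicately, and where I expect the proof in Section~\ref{sec:errors} to require the bulk of the technical work.
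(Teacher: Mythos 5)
Your proposal follows essentially the same route as the paper's proof in Section~\ref{sec:errors}: reduce to two classical analyses by commuting the basis measurement past the decoder, treat the blocks in $\cHp_{\mathcal{A}}$ and $\cHp_{\mathcal{B}}$ as training data for solving \eqref{eq:Gauss_bit} and \eqref{eq:Gauss_phase}, bound the failure events (subspace collision via a random-subspace intersection estimate giving the $O(1/\Q)$ term, and rank preservation of the injected noise via the Vandermonde polynomial-root argument giving the $(n_\ell')^{m_0}/(\Q)^{m_0-m_1}$ term after a union bound over the $(\Q)^{m_1}$ vectors in the row space of the noise), and dualize via Lemma~\ref{lemm:invertible_to_unitary} for the phase basis. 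The remaining work you correctly defer to step~(iii) is exactly what the paper supplies through Lemmas~\ref{lemm:funda}--\ref{lemm:max_zero_prob}, so the plan is sound as stated.
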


By combining Eq. \eqref{eq:entfid_twoerr} and Lemma \ref{lemm:errorp}, we have the following inequality:
\begin{align*}
\max_{(\mathcal{F},S_{n_\ell})} 
1-F_e^2(\rho_{\mathrm{mix}}, \Lambda_{n_\ell}  )
&\leq O\paren*{ \max \px*{\frac{1}{\Q} , \frac{(n_\ell')^{m_0}}{(\Q)^{m_0-m_1}} } }. 
\end{align*}
From the condition \eqref{cond:nq},
and since the condition \eqref{cond:nq} implies $\lim_{\ell\to\infty} n_\ell/\Q= 0$,
the condition \eqref{eq:ent_fid_theo} is satisfied.

To summarize, the code in Section \ref{sec:protocol} satisfies the conditions \eqref{eq:aaaa}, \eqref{eq:bbbb}, and \eqref{eq:ent_fid_theo} in Theorem \ref{theo:SQNC_protocol}. Thus, Theorem \ref{theo:SQNC_protocol} is proved.

\section{Bit and Phase Error Probabilities} \label{sec:errors}
In this section, 
we prove Lemma \ref{lemm:errorp}, that is,
we bound separately the bit and phase error probabilities of $\Lambda_{n_\ell}$. 

\subsection{Lemmas for derivation of bit and phase error probabilities}

Before we prove Lemma \ref{lemm:errorp}, we prepare three lemmas.
The first lemma is a variant of \cite[Lemma 5]{HOKC17}. 
\begin{lemm}          \label{lemm:funda}
Let $\mathcal{V}$ be a vector space,
and $\mathcal{W}_1$ and $\mathcal{W}_2$ be subspaces of $\mathcal{V}$.
Suppose the following two conditions \textnormal{(A)} and \textnormal{(B)} hold.
\begin{itemize}
\item[(A)] $\mathcal{W}_1 \cap \mathcal{W}_2 = \{0\}$.
\item[(B)]
$n_0$ vectors $u_1+v_1, \ldots, u_{n_0}+v_{n_0} \in \mathcal{W}_1 \oplus \mathcal{W}_2$
span the subspace $\mathcal{W}_1 \oplus \mathcal{W}_2$. 
\end{itemize}

\noindent Then, the following two statements {hold.}
\begin{itemize}
\item[(C)] 
Let $\mathcal{W}_3$ be a subspace of $\mathcal{V}$ such that $\dim\mathcal{W}_3=\dim\mathcal{W}_1$.
For any bijective linear map $A$ from  $\mathcal{W}_1$ to $\mathcal{W}_3$,
there exists an invertible matrix $D$ on $\mathcal{V}$ such that
\begin{align}
P_{\mathcal{W}_3} D(u_i+v_i) = Au_i  \quad (\forall i\in\{1,\ldots,n_0\}),  \label{eq:cc}
\end{align}
where $P_{\mathcal{W}_3}$ is the projection to the subspace $\mathcal{W}_3$.
\item[(D)]
For any $u+v \in  \mathcal{W}_1 \oplus \mathcal{W}_2$, any matrix $D$ satisfying \eqref{eq:cc} satisfies
\begin{align}
P_{\mathcal{W}_3} D(u+v) = Au. \label{eq:dd}
\end{align}
\end{itemize}
\begin{proof}
From the condition \textnormal{(A)}, there exists an invertible matrix $D$ on $\mathcal{V}$ such that $Du=Au\in\mathcal{W}_3$ and $Dv\in\mathcal{W}_3^{\perp}$ for any $u\in\mathcal{W}_1$ and $v\in\mathcal{W}_2$. Then, the map $D$ satisfies \eqref{eq:cc}, which implies the condition \textnormal{(C)}.
Moreover, the condition \textnormal{(B)} guarantees that the condition \textnormal{(C)} implies the condition~\textnormal{(D)}.
\end{proof}
\end{lemm}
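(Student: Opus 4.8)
The plan is to establish \textnormal{(C)} by an explicit construction of $D$ and then derive \textnormal{(D)} from \textnormal{(C)} by a linearity argument that exploits the spanning hypothesis \textnormal{(B)}. For \textnormal{(C)}, I would define $D$ first on the subspace $\mathcal{W}_1 \oplus \mathcal{W}_2$ by setting $D u := A u$ for $u \in \mathcal{W}_1$ and $D v := B v$ for $v \in \mathcal{W}_2$, where $B$ is any injective linear map from $\mathcal{W}_2$ into the complementary subspace $\mathcal{W}_3^{\perp} = \ker P_{\mathcal{W}_3}$. Condition \textnormal{(A)} guarantees that this prescription is consistent, since the sum $\mathcal{W}_1 \oplus \mathcal{W}_2$ is direct and each of its elements decomposes uniquely as $u + v$. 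With this choice, for every $i$ we obtain $P_{\mathcal{W}_3} D(u_i + v_i) = P_{\mathcal{W}_3}(A u_i + B v_i) = A u_i$, because $A u_i \in \mathcal{W}_3$ is fixed by $P_{\mathcal{W}_3}$ while $B v_i \in \mathcal{W}_3^{\perp}$ is annihilated by it; this is exactly \eqref{eq:cc}.

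The remaining work for \textnormal{(C)} is to check that the map just defined extends to an \emph{invertible} matrix on all of $\mathcal{V}$. Here again \textnormal{(A)} is essential: it yields $\dim(\mathcal{W}_1 \oplus \mathcal{W}_2) = \dim \mathcal{W}_1 + \dim \mathcal{W}_2$, hence $\dim \mathcal{W}_2 \le \dim \mathcal{V} - \dim \mathcal{W}_1 = \dim \mathcal{W}_3^{\perp}$ (using $\dim \mathcal{W}_3 = \dim \mathcal{W}_1$), which is precisely what allows an injective $B$ into $\mathcal{W}_3^{\perp}$ to exist. I would then verify that $D$ is injective on $\mathcal{W}_1 \oplus \mathcal{W}_2$: if $A u + B v = 0$, then the $\mathcal{W}_3$ and $\mathcal{W}_3^{\perp}$ parts vanish separately, forcing $u = v = 0$ since $A$ is bijective and $B$ injective. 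Finally I would extend $D$ from $\mathcal{W}_1 \oplus \mathcal{W}_2$ to an automorphism of $\mathcal{V}$ by mapping a chosen complement of $\mathcal{W}_1 \oplus \mathcal{W}_2$ isomorphically onto a chosen complement of its image, which is possible because those two complements have equal dimension.

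For \textnormal{(D)}, I would argue purely from linearity. Let $D$ be \emph{any} invertible matrix satisfying \eqref{eq:cc}, and let $u + v \in \mathcal{W}_1 \oplus \mathcal{W}_2$ be arbitrary. By the spanning condition \textnormal{(B)} we may write $u + v = \sum_i c_i (u_i + v_i)$ for scalars $c_i$. Applying the linear map $P_{\mathcal{W}_3} D$ and invoking \eqref{eq:cc} gives $P_{\mathcal{W}_3} D(u+v) = \sum_i c_i A u_i$. The decisive point is that \textnormal{(A)} makes the $\mathcal{W}_1$-component of $u+v$ unique, so comparing $\mathcal{W}_1$-parts in $u + v = \sum_i c_i u_i + \sum_i c_i v_i$ yields $u = \sum_i c_i u_i$; linearity of $A$ then gives $\sum_i c_i A u_i = A u$, which is \eqref{eq:dd}.

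I expect the main obstacle to lie in the invertibility bookkeeping of \textnormal{(C)} rather than in the identity \textnormal{(D)}: one must confirm that $\mathcal{W}_2$ genuinely fits inside $\mathcal{W}_3^{\perp}$ and that the partial map extends to a full automorphism, both of which hinge on translating condition \textnormal{(A)} into the dimension inequality $\dim \mathcal{W}_2 \le \dim \mathcal{W}_3^{\perp}$. By contrast, \textnormal{(D)} is a short consequence of the spanning hypothesis \textnormal{(B)} combined with the uniqueness of the direct-sum decomposition guaranteed by \textnormal{(A)}.
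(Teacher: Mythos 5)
Your proposal is correct and follows essentially the same route as the paper's (very terse) proof: you construct $D$ by sending $\mathcal{W}_1$ into $\mathcal{W}_3$ via $A$ and $\mathcal{W}_2$ into $\mathcal{W}_3^{\perp}=\ker P_{\mathcal{W}_3}$, which is exactly the map the paper asserts to exist from condition (A), and you derive (D) from (C) by the linearity/spanning argument the paper invokes via condition (B). The only difference is that you supply the bookkeeping the paper omits (the dimension count $\dim\mathcal{W}_2\le\dim\mathcal{W}_3^{\perp}$, injectivity, and the extension to an automorphism of $\mathcal{V}$), all of which is sound.
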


In addition, we also prepare the following two lemmas.

\begin{lemm} \label{lemm:space}
For any positive integers $n_0\geq n_1+n_2$,
fix an $n_0$-dimensional vector space $\mathcal{V}$ over $\mathbb{F}_q$ and 
an $n_1$-dimensional subspace $\mathcal{W}\subset \mathcal{V}$, 
and 
{let $\mathfrak{R}$ be the set of $n_2$-dimensional subspaces of $\mathcal{V}$. 
When the choice of $\mathcal{R}\in\mathfrak{R}$ follows the uniform distribution,
we have}
\begin{align*}
\pr[\mathcal{W}\cap \mathcal{R}=\{0\}] = 1-O(q^{n_1+n_2-n_0-1}),
\end{align*}
where the big-O notation is with respect to the prime power $q$ which goes to infinity. 
\begin{proof}
The probability $\pr[\mathcal{W}\cap \mathcal{R}=\{0\}]$
is the same as the probability 
to choose $n_2$ linearly independent vectors 
so that they do not intersect with $\mathcal{W}$,
{which is done by the following method:
choose $v_1$ from $\mathcal{V}\setminus \mathcal{W}$, and 
for each $i\in\{1,\ldots,n_2-1\}$, choose $v_{i+1}$ from  $\mathcal{V}\setminus (\mathcal{W}\oplus \text{span}\{v_1,\ldots, v_{i}\})$ by the mathematical induction.
}
Therefore, we have
\begin{align*}
     &\pr[\mathcal{W}\cap \mathcal{R}=\{0\}]\\
    &=\Big[\frac{q^{n_0} - q^{n_1}}{q^{n_0}}\Big]\cdot\Big[\frac{q^{n_0} - q^{n_1+1}}{q^{n_0}-q^1}\Big]
     \cdot \cdots \cdot \Big[\frac{q^{n_0} - q^{n_1+n_2-1}}{q^{n_0}-q^{n_2-1}}\Big] \\
    &=1 - O(q^{n_1+n_2-n_0-1}).    
\end{align*}
\end{proof}
\end{lemm}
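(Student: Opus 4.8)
The plan is to compute $\pr[\mathcal{W}\cap\mathcal{R}=\{0\}]$ exactly by generating $\mathcal{R}$ through a sequential choice of linearly independent vectors, and then to estimate the resulting product as $q\to\infty$. First I would note that a uniformly random $n_2$-dimensional subspace $\mathcal{R}$ is obtained by choosing an ordered tuple $(v_1,\ldots,v_{n_2})$ of linearly independent vectors one at a time and setting $\mathcal{R}:=\mathrm{span}\{v_1,\ldots,v_{n_2}\}$: since every $n_2$-dimensional subspace has the same number of ordered bases, each such subspace arises with equal probability. Crucially, the event $\mathcal{W}\cap\mathcal{R}=\{0\}$ depends only on $\mathcal{R}$ and not on the particular ordered basis produced.

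Next I would reexpress the target event in terms of this process. Since $\dim\mathcal{W}=n_1$, we have $\mathcal{W}\cap\mathcal{R}=\{0\}$ if and only if $v_{i+1}\notin \mathcal{W}\oplus\mathrm{span}\{v_1,\ldots,v_i\}$ for every $i\in\{0,\ldots,n_2-1\}$; inductively this keeps $\mathcal{W}\oplus\mathrm{span}\{v_1,\ldots,v_i\}$ of dimension exactly $n_1+i$. The conditioning that $v_{i+1}$ merely be linearly independent from $v_1,\ldots,v_i$ forbids a subspace of dimension $i$, so the conditional success probability at step $i$ is $(q^{n_0}-q^{n_1+i})/(q^{n_0}-q^{i})$. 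Multiplying over all steps gives
\begin{align*}
\pr[\mathcal{W}\cap\mathcal{R}=\{0\}]
=\prod_{i=0}^{n_2-1}\frac{q^{n_0}-q^{n_1+i}}{q^{n_0}-q^{i}}.
\end{align*}

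Finally I would estimate this product. Because $n_0\geq n_1+n_2$, every exponent appearing is strictly negative, so each factor is $(1-q^{n_1+i-n_0})/(1-q^{i-n_0})=1-q^{n_1+i-n_0}+O(q^{i-n_0})$, and the whole product equals $1-\sum_{i=0}^{n_2-1}q^{n_1+i-n_0}+(\text{terms of smaller order})$. The dominant contribution comes from $i=n_2-1$, giving $q^{n_1+n_2-1-n_0}$, so the product is $1-O(q^{n_1+n_2-n_0-1})$, as claimed. The step I would be most careful about is the dimension bookkeeping in the second paragraph---checking that the forbidden subspaces indeed have dimensions $n_1+i$, which is where the uniform sampling and the inductive independence are used---whereas the final asymptotic estimate, dominated by its largest geometric term, is routine.
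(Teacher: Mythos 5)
Your proposal is correct and follows essentially the same route as the paper: generate $\mathcal{R}$ by sequentially choosing linearly independent vectors avoiding $\mathcal{W}\oplus\mathrm{span}\{v_1,\ldots,v_i\}$, obtain the product $\prod_{i=0}^{n_2-1}(q^{n_0}-q^{n_1+i})/(q^{n_0}-q^{i})$, and read off the asymptotics from the dominant term. Your write-up is in fact slightly more careful than the paper's (justifying that ordered independent tuples induce the uniform distribution on subspaces, and getting the first denominator as $q^{n_0}-1$ rather than $q^{n_0}$), but these refinements do not change the argument or the conclusion.
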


\begin{lemm}    \label{lemm:max_zero_prob}
For any positive integer $n_\ell' > 3m_0$,
\begin{align}
 &\max_{x\neq \zerom{n_\ell'}{1}} \!\!
  \pr_{\vsec}\bparen*{ x^{\top} ((R_1^{\vsec})^{-1})^{\mathcal{A}} \! = \! \zerom{1}{\mm} } \!\leq\! \Big(\frac{n_\ell'\!-\!2\mm}{\Q}\Big)^{\!\mm}\!, \label{ineq:bit_max}\\
 &\max_{x\neq \zerom{n_\ell'}{1}} \!\!
 \pr_{\vsec}[ x^{\top} ([R_1^{\vsec}]_p^{-1})^{\mathcal{B}} \!=\! \zerom{1}{\mm} ] \leq \Big(\frac{n_\ell'\!-\!2\mm}{\Q}\Big)^{\!\mm}, \label{ineq:phase_max}
\end{align}
where 
the maximum is with respect to any nonzero vector $x\in\mathbb{F}_{q'}^{n_{\ell}}$,
and the random variable $\vsec=(\vsec_1,\ldots,\vsec_{4m_0})$ and the matrix $R_1^\vsec$ are defined in Section~\ref{sec:protocol}.
\end{lemm}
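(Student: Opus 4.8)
The plan is to reduce both inequalities to the elementary fact that a nonzero univariate polynomial of degree $d$ over $\mathbb{F}_{\Q}$ has at most $d$ roots, so that a uniformly distributed element of $\mathbb{F}_{\Q}$ takes any prescribed value with probability at most $d/\Q$. First I would compute the two $n_\ell'\times \mm$ matrices explicitly by multiplying out the block-triangular factorizations of $R_1^{\vsec}$ and $(R_1^{\vsec})^{-1}$ given in Section~\ref{sec:protocol}. This yields that $((R_1^{\vsec})^{-1})^{\mathcal{A}}$ is the matrix whose three row blocks (of heights $\mm$, $\mm$, $n_\ell'-2\mm$) are $I_{\mm}$, $-Q_3^{\top}-Q_4$, and $-Q_1$; and since $[A]_p=(A^{-1})^{\top}$ gives $[R_1^{\vsec}]_p^{-1}=(R_1^{\vsec})^{\top}$, its $\mathcal{B}$-block has row blocks $Q_3+Q_4^{\top}+Q_1^{\top}Q_2$, $I_{\mm}$, and $Q_2$. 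Throughout I would use the structural fact that $Q_1,\dots,Q_4$ are Vandermonde-type blocks in the $4\mm$ mutually independent coordinates $\vsec_1,\dots,\vsec_{4\mm}$, and that each column depends on only one of them: column $j$ of $Q_1$ (resp.\ $Q_2$, $Q_3$, $Q_4$) involves only $\vsec_j$ (resp.\ $\vsec_{\mm+j}$, $\vsec_{2\mm+j}$, $\vsec_{3\mm+j}$).

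For the bit inequality, write a nonzero $x\in\mathbb{F}_{\Q}^{n_\ell'}$ in the conformal form $x=(x^{(1)},x^{(2)},x^{(3)})$. The condition $x^{\top}((R_1^{\vsec})^{-1})^{\mathcal{A}}=\zerom{1}{\mm}$ becomes the $\mm$ scalar equations $x^{(1)}_j=[(x^{(2)})^{\top}(Q_3^{\top}+Q_4)]_j+[(x^{(3)})^{\top}Q_1]_j$. I would split on $x^{(3)}$. If $x^{(3)}\neq\zerom{n_\ell'-2\mm}{1}$, condition on every coordinate of $\vsec$ except $\vsec_1,\dots,\vsec_{\mm}$; then equation $j$ reads $p(\vsec_j)=c_j$ with the fixed nonzero polynomial $p(t)=\sum_i x^{(3)}_i t^i$ of degree $\leq n_\ell'-2\mm$ and a constant $c_j$, so it holds with probability $\leq(n_\ell'-2\mm)/\Q$; the independence of $\vsec_1,\dots,\vsec_{\mm}$ gives the product bound after taking the expectation over the conditioned coordinates. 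If $x^{(3)}$ vanishes but $x^{(2)}\neq\zerom{\mm}{1}$, I would instead leave $\vsec_{3\mm+1},\dots,\vsec_{4\mm}$ free (they enter only through $Q_4$), so equation $j$ becomes $\sum_i x^{(2)}_i(\vsec_{3\mm+j})^i=c_j$ of degree $\leq\mm$, giving probability $\leq\mm/\Q\leq(n_\ell'-2\mm)/\Q$ by $n_\ell'>3\mm$. Finally, if $x^{(2)}$ and $x^{(3)}$ both vanish then $x^{(1)}\neq0$ makes the equations unsatisfiable, so the probability is $0$.

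The phase inequality is handled symmetrically, but its matrix carries the quadratic cross-term $Q_1^{\top}Q_2$, which couples the $Q_1$- and $Q_2$-coordinates; this is the main obstacle, since a single uniform conditioning no longer decouples the $\mm$ equations. I would resolve it by splitting on $x^{(1)}$. When $x^{(1)}\neq\zerom{\mm}{1}$, I would keep only $\vsec_{2\mm+1},\dots,\vsec_{3\mm}$ (the $Q_3$-coordinates) free and condition on all others; this freezes the troublesome $Q_1^{\top}Q_2$ and $Q_2$ terms into constants, and equation $j$ collapses to $\sum_i x^{(1)}_i(\vsec_{2\mm+j})^i=c_j$, a nonzero polynomial of degree $\leq\mm$ in the single fresh variable $\vsec_{2\mm+j}$, yielding probability $\leq\mm/\Q$ per equation and hence the product bound. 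When $x^{(1)}=\zerom{\mm}{1}$ the cross-term disappears and the condition reduces to $(x^{(2)})^{\top}+(x^{(3)})^{\top}Q_2=\zerom{1}{\mm}$; then I would leave $\vsec_{\mm+1},\dots,\vsec_{2\mm}$ free and argue exactly as in the bit case ($x^{(3)}\neq0$ gives degree $\leq n_\ell'-2\mm$, while $x^{(3)}=0$ forces $x^{(2)}\neq0$ and unsatisfiability).

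The routine parts are the block-matrix multiplications and the bookkeeping of which $\vsec$-coordinate each column depends on; the one genuinely delicate point is the phase case, where I must choose, according to whether $x^{(1)}$ vanishes, the single Vandermonde block to keep random so that after conditioning the $\mm$ equations become independent single-variable polynomial identities. In every branch the product form $\big((n_\ell'-2\mm)/\Q\big)^{\mm}$ arises from the $\mm$ independent fresh coordinates, one per column, together with the degree bound $n_\ell'-2\mm$ (or the smaller $\mm$, absorbed via $n_\ell'>3\mm$) on the relevant polynomial.
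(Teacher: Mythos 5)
Your proof is correct, and its core engine is the same as the paper's: the paper invokes \cite[Claim 5]{Jaggi2008} (reproduced as Lemma \ref{lemm:vandermonde}), whose content is exactly your per-column argument that each column $j$ of a Vandermonde block depends on a single fresh uniform coordinate, so each of the $\mm$ equations becomes an independent single-variable polynomial identity with at most $\deg$ many roots; your explicit block computations of $((R_1^{\vsec})^{-1})^{\mathcal{A}}$ and $([R_1^{\vsec}]_p^{-1})^{\mathcal{B}}=((R_1^{\vsec})^{\top})^{\mathcal{B}}$ agree with the paper's relations \eqref{eq111}--\eqref{eq22222}, and your bit-basis case split ($x^{\mathcal{C}}\neq 0$ via $Q_1$; then $x^{\mathcal{B}}\neq 0$ via $Q_4$; then impossibility) is identical to the paper's. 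Where you genuinely diverge is the phase inequality: the paper's first case takes $x^{\mathcal{C}}\neq 0$ and applies the Vandermonde lemma to $Q_2$, but $Q_2$ also enters through the cross term $x^{\mathcal{A}}Q_1^{\top}Q_2$, so after conditioning the effective coefficient vector of $Q_2$ is $x^{\mathcal{C}}+x^{\mathcal{A}}Q_1^{\top}$, which is random and can vanish when $x^{\mathcal{A}}\neq 0$ --- a subtlety the paper does not address. Your decomposition splits first on $x^{\mathcal{A}}$: when $x^{\mathcal{A}}\neq 0$ you exploit the linearly occurring, decoupled block $Q_3$ (degree $\leq \mm$, absorbed via $n_\ell'>3\mm$), and when $x^{\mathcal{A}}=0$ the cross term disappears and the bit-basis argument applies verbatim to $Q_2$. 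This buys a cleaner and fully airtight treatment of the quadratic term at no cost in the bound, at the price of re-deriving the Vandermonde claim from scratch rather than citing it.
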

The proof of Lemma \ref{lemm:max_zero_prob} is given in Appendix \ref{sec:two_lemm_max}.

\subsection{The analysis of {protocol} after bit basis measurement} \label{subsec:Application}
Before we prove the upper bound \eqref{biterr} for the bit error probability,
we analyze the protocol when any bit basis state $|M\rangle_b\in \cH_{\mathrm{code}}^{(n_\ell)}$ is the input state of the code.
In the following, the parameter $(\mathcal{F}, S_{n_\ell})\in\zeta_{m_0,m_1}^{(n_\ell)} $ for the network operation is fixed but arbitrary.

In this case, the sender sends $\mathsf{E}^{(n_\ell)}_{R_e,R_s}(|M\rangle_{bb}\langle M|)$ over the network,
and the receiver receives the state $\Gamma[\mathcal{F}^{n_{\ell}},S_{n_\ell}]\circ\mathsf{E}^{(n_\ell)}_{R_e,R_s}(|M\rangle_{bb}\langle M|)$ on $\cH^{\otimes m_0\times n_\ell} = {(\cH')^{\otimes m_0\times n_\ell'}}$, where $\Gamma[\mathcal{F}^{n_{\ell}},S_{n_\ell}]$ is defined in \eqref{eq:network_operations}.
The receiver applies the decoder $\mathsf{D}^{(n_\ell)}_{R_s}$ and, finally, performs the bit basis measurement to the output state of the decoder.

Note that the bit basis measurement to the output state of the decoder commutes with the decoding operation $\mathsf{D}^{(n_\ell)}_{R_s}$.
That is, 
the process of applying the quantum decoder $\mathsf{D}^{(n_\ell)}_{R_s}$ and then performing the bit basis measurement on $\cH_{\mathrm{code}}^{(n_\ell)}$
is equivalent to
the process of performing the bit basis measurement on $(\cH')^{\otimes m_0\times n_\ell'}$ and then 
applying the classical decoding which corresponds to the quantum decoder $\mathsf{D}^{(n_\ell)}_{R_s}$.
Therefore, we adopt the latter method to calculate the bit error probability.

Let $Y\in \mathbb{F}_{\Q}^{m_0\times n_\ell'}$ be the outcome of the bit basis measurement on $(\cH')^{\otimes m_0\times n_\ell'} = \cHp_\mathcal{A}\otimes \cHp_\mathcal{B}\otimes \cHp_\mathcal{C}$.
From Eq.~\eqref{H2}, the matrix $Y$ is written as
\begin{align}
Y = \tilde{K}X'+ \tilde{W},     \label{eq:bit_output}
\end{align}
where 
$\tilde{K}\in\mathbb{F}_{\Q}^{m_0\times m_0}$ and $\tilde{W}\in\mathbb{F}_{\Q}^{m_0\times n_\ell'}$ are matrices equivalent to $K\in\mathbb{F}_{q}^{m_0\times m_0}$ and $WZ\in\mathbb{F}_{q}^{m_0\times n_\ell}$ in \eqref{H2} by field extension, respectively, 
and $X' := R_e X R_1^{\vsec}\in\mathbb{F}_{\Q}^{m_0\times n_\ell'}$ for $X\in\mathbb{F}_{\Q}^{m_0\times n_\ell'}$ defined with some matrices $\bar{E}_1 \in \mathbb{F}_{\Q}^{(m_0-m_1)\times \mm}$, $\bar{E}_2\in\mathbb{F}_{\Q}^{m_1\times \mm}$, and $\bar{E}_3\in\mathbb{F}_{\Q}^{m_1\times (n_\ell'\!-\!2\mm\!)}$ by
\begin{align}
    X &:= 
  \begin{bmatrix}
    \begin{bmatrix}
    \zerom{m_1}{\mm} \\
    \multicolumn{1}{c}{\multirow{2}{*}{$R_{2,b}$}} \\
    \multicolumn{1}{c}{} \\
    \end{bmatrix}
    ,
    \begin{bmatrix}
    \multicolumn{1}{c}{\multirow{2}{*}{$\bar{E}_1$}} \\
    \multicolumn{1}{c}{} \\
    \bar{E}_2 \\
    \end{bmatrix}
    ,
    \begin{bmatrix}
    \zerom{m_1}{n_\ell'\!-\!2\mm\!}\\
    M \\
    \bar{E}_3\\
    \end{bmatrix} 
    \end{bmatrix}. \label{eq:lb_me}
\end{align}

By Decode 1,
the matrix $Y$ is decoded as
\begin{align*}
Y_1 := Y(R_1^{\vsec})^{-1} = (\tilde{K} R_e X + \tilde{W}(R_1^{\vsec})^{-1}).
\end{align*}

Since the bit measurement outcome $O_b$ in Decode 2 is $Y_1^{\mathcal{A}} = (Y(R_1^{\vsec})^{-1})^\mathcal{A} = Y((R_1^{\vsec})^{-1})^\mathcal{A}$,
the equation \eqref{eq:Gauss_bit} is written as 
\begin{align}
 \!\!P_{b}D_b \!\paren*{\!\tilde{K}R_e \!\!\begin{bmatrix}\!\zerom{\!m_1\!}{\mm\!}\! \\ R_{2,b} \\ \end{bmatrix} 
                            \!\!+\! \tilde{W}((R_1^{\vsec})^{-1})^\mathcal{A} \!} 
 \!\!=\!\!\begin{bmatrix} \!\zerom{\!m_1\!}{\mm\!}\! \\ R_{2,b} \\   \end{bmatrix} \!.\!\!  \label{eq:Gauss_bit_2}
\end{align}
By Decode 2, the matrix $Y_1$ is decoded as 
\begin{align*}
Y_2 := D_b Y_1 = D_b(\tilde{K} R_eX + \tilde{W}(R_1^{\vsec})^{-1}).
\end{align*}

Though the decoding succeeds if $Y_2^{\mathcal{C}2}=M$, 
we evaluate instead the probability that
$P_b Y_2^{\mathcal{C}} = [\zerom{m_1}{n_\ell'-2m_0}^{\top}, M^{\top}, \bar{E}_3^{\top}]^{\top}$ holds.
In other words, since $P_bY_2^{\mathcal{C}}$ is written as 
\begin{align}
 P_{b}Y_2^\mathcal{C} 
 &= P_{b}D_b Y ((R_1^{\vsec})^{-1})^\mathcal{C} \nonumber \\
 &= P_{b} D_b 
 \paren*{\!\tilde{K}R_e \!\!\begin{bmatrix}\zerom{m_1}{n_\ell'\!-\!2\mm\!} \\ M \\ \bar{E}_3\end{bmatrix} 
                            + \tilde{W}((R_1^{\vsec})^{-1})^\mathcal{C} \!} ,
\end{align}
we evaluate the probability of
\begin{align}
\!\!\!
P_{b} D_b \!
 \paren*{\!\!\tilde{K}R_e \!\!\begin{bmatrix}\!\zerom{m_1}{n_{\!\ell}'\!-\!2\mm\!}\! \\ M \\ \bar{E}_3\end{bmatrix} 
                            \!\!+\! \tilde{W}((R_1^{\vsec})^{-1})^\mathcal{C}\!\! } 
 &\!\!= \!\!\begin{bmatrix}\!\zerom{m_1}{n_{\!\ell}'\!-\!2\mm\!} \!\\ M \\ \bar{E}_3\end{bmatrix}\!\!.\! \label{eq:C_recover}
\end{align}
Then, the decoding success probability is lower bounded by the probability that \eqref{eq:C_recover} holds.

\subsection{Upper bound of bit error probability} \label{subsec:prove_bit}

In this subsection, we derive the upper bound \eqref{biterr} for the bit error probability in Lemma \ref{lemm:errorp}.

Apply Lemma \ref{lemm:funda} to the following case:
\begin{gather}
\mathcal{V} := \mathbb{F}_{q'}^{m_0},\quad
\mathcal{W}_1 := \Ima \tilde{K} R_e |_{\mathcal{W}_b}, \label{eq:alloc}\\
\mathcal{W}_2 := \Ima \tilde{W},\quad
\mathcal{W}_3 := \mathcal{W}_b,\quad
A = (\tilde{K} R_e |_{\mathcal{W}_b})^{-1}\nonumber\\
[u_1+v_1,\ldots, u_{m_0}+v_{m_0}] := 
 \tilde{K}R_e \!\!\begin{bmatrix}\!\zerom{\!m_1\!}{\mm\!}\! \\ R_{2,b} \\ \end{bmatrix} 
                            \!\!+\! \tilde{W}((R_1^{\vsec})^{-1})^\mathcal{A},  \nonumber
\end{gather}
where $\mathcal{W}_b$ is the image of the projection $P_b$ defined in \eqref{eq:Gauss_bit}.
Let (A'), (B'), (C'), and (D') be the conditions (A), (B), (C), and (D) of Lemma \ref{lemm:funda} for this allocation, respectively.
If the conditions (A') and (B') hold,
the condition (C') implies that the equation \eqref{eq:Gauss_bit_2} has the solution $D_b$.
Moreover, it is clear from (D') that Eq. \eqref{eq:C_recover} holds, which implies there is no error in the protocol.
Therefore, we have the inequality
\begin{align}
\pr_{R_e,R_s} [\textnormal{(A')} \cap \textnormal{(B')}] \leq 1- \text{(bit error probability)}, \label{eq:bitlower}
\end{align}
where the probability of (A') depends on the random variable $R_e$ and that of (B') depends on random variables $R_e$ and $R_s=(\vsec, R_2)$.
That is, the evaluation of the bit error probability is reduced to the evaluation of the probability that both conditions (A') and (B') hold.

In the remainder of this subsection, we will prove the following lemma.
\begin{lemm}    \label{lemm:probab_ess}
The following inequalities holds:
\begin{align}
\pr_{R_e} [\textnormal{(A')}] & \geq 1-O\paren*{\frac{1}{\Q}}   \label{eq:pca},\\
\pr_{R_e,R_s} [\textnormal{(B')} | \textnormal{(A')} ]  & \geq 
    1-O\Big( \max \Big\{\frac{1}{\Q} , \frac{(n_\ell')^{m_0}}{(\Q)^{m_0-m_1}} \Big\} \Big).  \label{eq:pcb}
\end{align}
\end{lemm}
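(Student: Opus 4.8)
The plan is to handle the two bounds separately. Write $U := \tilde K R_e \begin{bmatrix}\zerom{m_1}{\mm}\\ R_{2,b}\end{bmatrix}$ and $V := \tilde W((R_1^{\vsec})^{-1})^{\mathcal A}$ for the two summands defining $u_1+v_1,\ldots,u_{m_0}+v_{m_0}$ in \eqref{eq:alloc}, so that that matrix is $U+V$, and set $d := \dim\mathcal W_2 = \rank\tilde W$, which satisfies $d\le m_a\le m_1$. Since $R_{2,b}$ has full rank $m_0-m_1$, I note for later use that $\Ima U=\mathcal W_1$ and $\ker U=\ker R_{2,b}$ (of dimension $m_1$).

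For \eqref{eq:pca} I would first show that $\mathcal W_1=\tilde K R_e(\mathcal W_b)$ is uniformly distributed over the $(m_0-m_1)$-dimensional subspaces of $\mathcal V=\mathbb F_{q'}^{m_0}$: since $R_e$ is uniform over the invertible matrices, $R_e(\mathcal W_b)$ is uniform over such subspaces (the general linear group acts transitively on subspaces of a fixed dimension, with stabilizers of equal size), and the fixed invertible $\tilde K$ merely permutes them. With $\mathcal W_2$ fixed of dimension $d\le m_1$, I would then invoke Lemma~\ref{lemm:space} with $n_0=m_0$, $n_1=d$, $n_2=m_0-m_1$ (so $n_0\ge n_1+n_2$), obtaining $\pr_{R_e}[\textnormal{(A')}]=\pr[\mathcal W_1\cap\mathcal W_2=\{0\}]=1-O(q'^{\,d-m_1-1})=1-O(1/\Q)$.

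For \eqref{eq:pcb} the heart of the argument is an algebraic reduction of (B') under (A'). Because $\Ima U=\mathcal W_1$, $\Ima V\subseteq\mathcal W_2$, and $\mathcal W_1\cap\mathcal W_2=\{0\}$, any $c$ with $(U+V)c=0$ forces $Uc=-Vc\in\mathcal W_1\cap\mathcal W_2=\{0\}$, whence $\ker(U+V)=\ker U\cap\ker V=\ker R_{2,b}\cap\ker V$. A dimension count then shows that (B')---equivalently $\rank(U+V)=(m_0-m_1)+d$---holds if and only if $V$ maps the $m_1$-dimensional space $\ker R_{2,b}$ onto $\mathcal W_2$. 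Therefore (B') fails if and only if some nonzero row vector $x^{\top}$ in the row space of $\tilde W$ makes $x^{\top}((R_1^{\vsec})^{-1})^{\mathcal A}$ annihilate $\ker R_{2,b}$, i.e.\ lie in the row space of $R_{2,b}$. Crucially, this event depends only on $(\vsec,R_{2,b})$ while (A') depends only on $R_e$; as these are independent, the conditioning on (A') drops and I may bound the probability over $(\vsec,R_{2,b})$ alone.

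It then remains to union-bound over the $\frac{q'^{\,d}-1}{q'-1}=O(\Q^{\,d-1})\le O(\Q^{\,m_1-1})$ projective directions $[x]$ in the row space of $\tilde W$ (the event is homogeneous in $x$). For a fixed nonzero $x$: if $x^{\top}((R_1^{\vsec})^{-1})^{\mathcal A}=\zerom{1}{\mm}$, then Lemma~\ref{lemm:max_zero_prob} bounds its $\vsec$-probability by $((n_\ell'-2\mm)/\Q)^{m_0}\le(n_\ell')^{m_0}/\Q^{m_0}$; otherwise, since $R_{2,b}$ is independent of $\vsec$ and its row space is uniform among $(m_0-m_1)$-dimensional subspaces, a fixed nonzero vector lies in that row space with probability $O(\Q^{-m_1})$. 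Summing the per-direction bound $(n_\ell')^{m_0}/\Q^{m_0}+O(\Q^{-m_1})$ over $O(\Q^{m_1-1})$ directions yields $O\big((n_\ell')^{m_0}/\Q^{m_0-m_1}\big)+O(1/\Q)$, which is \eqref{eq:pcb}. I expect the main obstacle to be the reduction in the preceding paragraph: pinning down the exact rank condition for (B') and recognizing that any failure pushes a nonzero element of the row space of $\tilde W$ into the bad subspace, which is exactly what lets Lemma~\ref{lemm:max_zero_prob} combine with the randomness of $R_{2,b}$.
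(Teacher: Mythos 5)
Your proof is correct. Your treatment of \eqref{eq:pca} coincides with the paper's (the paper applies Lemma~\ref{lemm:space} with exactly the same allocation $\mathcal{W}=\Ima\tilde W$, $\mathcal{R}=\Ima\tilde K R_e|_{\mathcal{W}_b}$; your remark that $\tilde K R_e(\mathcal{W}_b)$ is uniform over the $(m_0-m_1)$-dimensional subspaces is a hypothesis of that lemma which the paper leaves implicit), but your route to \eqref{eq:pcb} is genuinely different. The paper argues in two stages: it first isolates the necessary rank condition \eqref{cond:rank_err}, namely $\rank\tilde W((R_1^{\vsec})^{-1})^{\mathcal{A}}=\rank\tilde W$, and bounds its failure by applying Lemma~\ref{lemm:max_zero_prob} to all $(\Q)^{\rank\tilde W}$ vectors of the row space of $\tilde W$; then, conditionally on \eqref{cond:rank_err} and (A'), it reduces (B') to the linear independence of $m_2$ selected columns, handled by a second application of Lemma~\ref{lemm:space} to the kernel spaces $\mathcal{S}_u^{\perp},\mathcal{S}_v^{\perp}$ together with the auxiliary estimate \eqref{lemm:dim_WZ} proved in Appendix D. You instead use (A') to get the exact identity $\ker(U+V)=\ker R_{2,b}\cap\ker V$, dualize it to ``(B') fails iff some nonzero element of the row space of $\tilde W$ is carried by $((R_1^{\vsec})^{-1})^{\mathcal{A}}$ into the row space of $R_{2,b}$,'' and finish with a single union bound over projective directions, splitting each direction's bad event between Lemma~\ref{lemm:max_zero_prob} (image zero) and the uniformity of the row space of $R_{2,b}$ (image a fixed nonzero vector). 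Your version buys a cleaner logical structure --- one necessary-and-sufficient condition instead of the cascade \eqref{cond:rank_err}, \eqref{cond:solution_empty}, \eqref{cond:last}, no Appendix D, an explicit justification that independence of $R_e$ from $(\vsec,R_2)$ lets the conditioning on (A') drop --- and it even yields the slightly sharper exponent $(n_\ell')^{m_0}/(\Q)^{m_0-m_1+1}$ before you round up. The paper's version is more modular in that it reuses Lemma~\ref{lemm:space} and directly mirrors the classical analysis of \cite{HOKC17}, keeping the parallel with the classical code visible. Both arguments ultimately draw on the same two sources of randomness in the same way: the Vandermonde structure of $R_1^{\vsec}$ through Lemma~\ref{lemm:max_zero_prob}, and the uniform choice of $R_e$ and $R_{2,b}$.
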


Then, by combining the inequality \eqref{eq:bitlower} with Lemma \ref{lemm:probab_ess}, we obtain the desired upper bound \eqref{biterr} for the bit error probability. 

\vspace{0.5em}

\subsubsection{Proof of lower bound \eqref{eq:pca} for $\pr_{R_e}[\textnormal{(A')}]$}   \label{sec:asdf1}
Apply Lemma \ref{lemm:space} to the case $\mathcal{V} := \mathbb{F}_{\Q}^{m_0}$, $\mathcal{W} := \Ima \tilde{W}$, and $\mathcal{R} := \Ima \tilde{K} R_e|_{\mathcal{W}_b}$.
In this case, 
we have $n_1=\rank \tilde{W} \leq \rank WZ \leq \rank W \leq m_a \leq m_1$
and 
$n_2=\rank \tilde{K} R_e|_{\mathcal{W}_b} = m_0-m_1$.
Therefore, Lemma \ref{lemm:space} implies the desired inequality \eqref{eq:pca}.

\vspace{0.5em}

\subsubsection{Proof of lower bound \eqref{eq:pcb} for $\pr_{R_e,R_s}[\textnormal{(B')}|\textnormal{(A')}]$}    \label{sec:asdf}
We derive the lower bound \eqref{eq:pcb} for $\pr_{R_e,R_s}[\textnormal{(B')}|\textnormal{(A')}]$, by three steps.
In the following, we assume the condition (A'). 

{\it Step 1:\enskip}
First, we give one necessary condition for (B') and calculate the probability that the necessary condition is satisfied. 
The condition (B') is equivalent to
\begin{align}
  &  \rank \Big( \tilde{K}R_e \begin{bmatrix}\zerom{m_1}{\mm} \\ R_{2,b} \\ \end{bmatrix} + \tilde{W}((R_1^{\vsec})^{-1})^\mathcal{A} \Big) \\
 &= \rank R_{2,b} + \rank \tilde{W}, \label{ineq:rank_ineq}
\end{align}
On the other hand, the following inequality holds from $\rank(A+B)\leq \rank A + \rank B$ and $\rank(AB)\leq \min\{\rank A,\rank B\}$ for any matrices $A$ and $B$:
\begin{align}
 & \rank \Big( \tilde{K}R_e \begin{bmatrix}\zerom{m_1}{\mm} \\ R_{2,b} \\ \end{bmatrix} + \tilde{W}((R_1^{\vsec})^{-1})^\mathcal{A} \Big) \\
 &\leq \rank R_{2,b} + \rank \tilde{W}((R_1^{\vsec})^{-1})^\mathcal{A} \nonumber \\
 &\leq \rank R_{2,b} + \rank \tilde{W}, \label{ineq:rank_ineq}
\end{align}
Therefore, the following condition is a necessary condition for (B'):
\begin{align}
\rank \tilde{W}((R_1^{\vsec})^{-1})^\mathcal{A} = \rank \tilde{W}. \label{cond:rank_err}
\end{align}

The condition (\ref{cond:rank_err}) holds if and only if $x^{\top}\tilde{W}((R_1^{\vsec})^{-1})^\mathcal{A}\neq \zerom{1}{m_0}$ holds for any $x \in \mathbb{F}_{\Q}^{m_0}$ such that $x^{\top}\tilde{W} \neq \zerom{n_\ell'}{1}$.
Apply Lemma \ref{lemm:max_zero_prob} to all $(\Q)^{\rank \tilde{W}}$ vectors in $\{x^{\top}\tilde{W} \neq \zerom{n_\ell'}{1} \mid x \in \mathbb{F}_{\Q}^{m_0}\}$, and then we have
\begin{align}
\pr_\vsec[\eqref{cond:rank_err}|\textnormal{(A')}]
 &\geq 1-(\Q)^{\rank \tilde{W}}\paren*{\frac{n_\ell'\!-\!2\mm\!}{\Q}}^{\mm} \nonumber  \\
 &\geq 1-(\Q)^{m_1}\paren*{\frac{n_\ell'\!-\!2\mm\!}{\Q}}^{\mm} \nonumber  \\
 &\geq 1- \frac{(n_\ell')^{\mm}}{(\Q)^{\mm-m_1}}.    \label{ineq:bound0}
\end{align}

{\it Step 2:\enskip}
In this step, we evaluate the conditional probability that (B') holds under the conditions (A') and \eqref{cond:rank_err}, i.e., $\pr_{R_e,R_s} [\textnormal{(B')} | \eqref{cond:rank_err}\cap\textnormal{(A')}]$.


Recall that 
the vectors $u_k,v_k\in\mathbb{F}_{\Q}^{m_0}$ for $k=1,\ldots,\mm$ are defined by \eqref{eq:alloc} as 
\begin{align*}
&[u_1 , \ldots , u_{\mm}] = \tilde{K}R_e \begin{bmatrix}\zerom{m_1}{\mm} \\  R_{2,b} \\  \end{bmatrix} ,\\
&[v_1 , \ldots , v_{\mm}] = \tilde{W}((R_1^{\vsec})^{-1})^\mathcal{A} .
\end{align*}
Let $m_2:= \rank R_{2,b}+\rank \tilde{W}$. 
Define an injective index function $i:\{1,...,m_0\}\to\{1,...,\mm\}$ such that $\rank (v_{i(1)}, \ldots,  v_{i(m_2)}) = \rank \tilde{W}$.
Note that the condition (B') holds 
if the $m_2$ vectors $u_{i(1)}+v_{i(1)},\ldots ,u_{i(m_2)}+v_{i(m_2)}$ are linearly independent.
Moreover, the condition (A') guarantees that the $m_2$ vectors $u_{i(1)}+v_{i(1)},\ldots ,u_{i(m_2)}+v_{i(m_2)}$ are linearly independent if the following condition holds:
\begin{align}
\mathcal{S}_u^{\perp} \cap \mathcal{S}_v^{\perp} &= \{\zerom{m_2}{1}\},
\label{cond:solution_empty}
\end{align}
where
\begin{align}
\mathcal{S}_u^{\perp} &:=
\px*{
x\in\mathbb{F}_{\Q}^{m_2} \ \Big\lvert\ 
[u_{i(1)}, \ldots,  u_{i(m_2)}]
x 
=  \zerom{m_0}{1}
},
\nonumber \\
\mathcal{S}_v^{\perp} &:=
\px*{
x\in\mathbb{F}_{\Q}^{m_2} \ \Big\lvert\ 
[v_{i(1)}, \ldots,  v_{i(m_2)}]
x
=  \zerom{m_0}{1}
}.
\nonumber
\end{align}
That is, we have the inequality
\begin{align}
\pr_{R_e,R_s}[\textnormal{(B')}|\eqref{cond:rank_err}\cap\textnormal{(A')}]   \geq 
\pr_{R_e,R_s}[\eqref{cond:solution_empty}| \eqref{cond:rank_err}\cap \textnormal{(A')} ]. \label{ineq:51B}
\end{align}

Then, we evaluate the probability that \eqref{cond:solution_empty} holds.
It follows from the definitions of vectors $u_1 , \ldots , u_{\mm},v_1 , \ldots , v_{\mm}$ and the index function $i$ that
\begin{align*}
\dim\mathcal{S}_u^{\perp} &\geq m_2-\rank[u_{i(1)}, \ldots,  u_{i(m_2)}]\geq \rank \tilde{W}, \\
\dim \mathcal{S}_v^{\perp} &= m_2-\rank[v_{i(1)}, \ldots,  v_{i(m_2)}] = \rank  R_{2,b}.
\end{align*}
This implies $\dim\mathcal{S}_u^{\perp} +\dim\mathcal{S}_v^{\perp}  \geq m_2$, and therefore \eqref{cond:solution_empty} holds only if 
\begin{align}
\dim \mathcal{S}_u^{\perp} = \rank \tilde{W}. \label{cond:last}
\end{align}
We calculate the conditional probability that \eqref{cond:solution_empty} holds by the following relation:
\begin{align}
 &   \pr_{R_e,R_s}[\eqref{cond:solution_empty}|  \eqref{cond:rank_err} \cap\textnormal{(A')}] \nonumber \\
 &= \pr_{R_e,R_s}[\eqref{cond:solution_empty} | \eqref{cond:last} \cap  \eqref{cond:rank_err} \cap \textnormal{(A')}]   \nonumber\\
 &  \quad    \cdot \pr_{R_e,R_s}[ \eqref{cond:last}  \cap  \eqref{cond:rank_err} \cap\textnormal{(A')}].
        \label{ineq:bound1}
\end{align}
Applying Lemma \ref{lemm:space} with $(n_0,\mathcal{W},\mathcal{R}):= (m_2,\mathcal{S}_v^{\perp}, \mathcal{S}_u^{\perp})$, we have
\begin{align}
\pr_{R_e,R_s}[\eqref{cond:solution_empty} |  \eqref{cond:last} \cap \eqref{cond:rank_err} \cap \textnormal{(A')}] = 1-O\paren*{\frac{1}{\Q}}.
        \label{ineq:bound2}
\end{align}
Moreover, the following inequality is proved in Appendix \ref{sec:lem_dim_WZ}:
\begin{align}
\pr_{R_e,R_s}[\eqref{cond:last} \!\cap\!  \eqref{cond:rank_err}\!\cap\! \textnormal{(A')} ]\!\geq \!1\!-\!O\paren*{\frac{1}{\Q}}. \label{lemm:dim_WZ}
\end{align}
Finally, combining the inequalities \eqref{ineq:51B}, \eqref{ineq:bound1}, \eqref{ineq:bound2}, and \eqref{lemm:dim_WZ}, we have the inequality 
\begin{align}
\pr_{R_e,R_s}[\textnormal{(B')}|\eqref{cond:rank_err}\cap\textnormal{(A')}]   
&\geq \pr_{R_e,R_s}[\eqref{cond:solution_empty}| \eqref{cond:rank_err}\cap \textnormal{(A')} ] \nonumber \\    
    & \geq 1-O\paren*{\frac{1}{\Q}}. 
        \label{ineq:bound3}
\end{align}


{\it Step 3:\enskip}
From the two inequalities \eqref{ineq:bound0} and \eqref{ineq:bound3}, the probability $\pr_{R_e,R_s}[\textnormal{(B')}|\textnormal{(A')}]$ is evaluated as  
\begin{align}
                        & \pr_{R_e,R_s}[\textnormal{(B')}|\textnormal{(A')}] \nonumber \\
                        &= \pr_{R_e,R_s}[\textnormal{(B')}\cap\eqref{cond:rank_err}|\textnormal{(A')}]   \nonumber \\
                        &= \pr_{R_e,R_s}[\textnormal{(B')}|\eqref{cond:rank_err}\cap\textnormal{(A')}]   
                            \cdot\pr_{R_e,R_s}[\eqref{cond:rank_err}|\textnormal{(A')}]   \nonumber \\
                        &\geq \paren*{1- O\paren*{\frac{1}{\Q}}}\paren*{1-\frac{(n_\ell')^{m_0}}{(\Q)^{m_0-m_1}}} \nonumber\\
                        &=1-O\Big( \max \Big\{\frac{1}{\Q} , \frac{(n_\ell')^{m_0}}{(\Q)^{m_0-m_1}} \Big\} \Big). \nonumber
\end{align}
Thus, we obtain the inequality \eqref{eq:pcb}.

\subsection{Phase error probability} \label{subsec:prove_phase}
Since Lemma \ref{lemm:invertible_to_unitary} implies that coding and node operations are considered as classical linear operations even in the phase basis,
we can apply similar analysis to the phase basis transmission as in Sections \ref{subsec:Application} and \ref{subsec:prove_bit}.

Consider the situation that any phase basis state $|M \rangle_p\in\cH_{\mathrm{code}}^{(n_\ell)}$ is encoded and transmitted through the quantum network.
In the same way as the bit basis states, we analyze the case that the receiver performs the phase basis measurement on $(\cH')^{\otimes m_0\times n_\ell'}$ first, and then applies the decoding operations.
After the phase basis measurement on  $(\cH')^{\otimes m_0\times n_\ell'}$,
the measurement outcome $Y\in\mathbb{F}_{q'}^{m_0\times n_\ell'}$ is written similarly to  \eqref{eq:bit_output} as
\begin{align*}
Y := [\tilde{K}R_e]_p Z [R_1^{\vsec}]_p+ \tilde{W}',  
\end{align*}
where $\tilde{W}'\in\mathbb{F}_{\Q}^{m_0\times n_\ell'}$ is a matrix such that $\rank\tilde{W}'\leq m_1$ and
\begin{align*}
    Z &:= 
  \begin{bmatrix}
    \begin{bmatrix}
    \bar{E}_1' \\
    \multicolumn{1}{c}{\multirow{2}{*}{$\bar{E}_2'$}} \\
    \multicolumn{1}{c}{} \\
    \end{bmatrix}
    ,
    \begin{bmatrix}
    \multicolumn{1}{c}{\multirow{2}{*}{$R_{2,p}$}}\\
    \multicolumn{1}{c}{}\\
    \zerom{\!m_1\!}{\mm\!} \\
    \end{bmatrix}
    ,
    \begin{bmatrix}
    \bar{E}_3'\\
    M \\
    \zerom{\!m_1\!}{n_\ell'\!-\!2\mm}\\
    \end{bmatrix}
  \end{bmatrix}
    \in \mathbb{F}_{\Q}^{m_0\times n_\ell'}
\end{align*}
for some matrices
$\bar{E}_1' \in \mathbb{F}_{\Q}^{m_1\times \mm}$, $\bar{E}_2'\in\mathbb{F}_{\Q}^{(m_0-m_1)\times \mm}$, and $\bar{E}_3'\in\mathbb{F}_{\Q}^{m_1\times (n_\ell'\!-\!2\mm\!)}$.
By the decoder, the matrix $Y$ is decoded as
\begin{align*}
Y_2 := [D_p]_p \paren*{[\tilde{K}R_e]_p Z + \tilde{W}' [(R_1^{\vsec})^{-1}]_p}.
\end{align*}

Consider applying Lemma \ref{lemm:funda} in the following case:
\begin{gather}
\mathcal{V} := \mathbb{F}_{q'}^{m_0},\quad
\mathcal{W}_1 := \Ima [\tilde{K}R_e]_p |_{\mathcal{W}_p},  \label{eq:alloc_2}\\
\mathcal{W}_2 := \Ima [\tilde{W}]_p,\quad
\mathcal{W}_3 := \mathcal{W}_p,\quad
A = ([\tilde{K} R_e]_p |_{\mathcal{W}_p})^{-1}\nonumber \\
[u_1\!+\!v_1,\ldots, u_{m_0}\!\!+\!v_{m_0}] := 
 [\tilde{K}R_e]_p \!\!\begin{bmatrix}R_{2,p} \\ \!\zerom{\!m_1\!}{\mm\!}\! \\ \end{bmatrix} 
                            \!\!+\! [\tilde{W}]_p[(R_1^{\vsec})^{-1}]_p^\mathcal{A},    \nonumber
\end{gather}
where $\mathcal{W}_p$ is the image of the projection $P_p$ defined in \eqref{eq:Gauss_bit}.
Let (A''), (B''), (C''), and (D'') be the conditions (A), (B), (C), and (D) of Lemma \ref{lemm:funda} for this allocation, respectively.
From Lemma \ref{lemm:funda}, if the conditions (A'') and (B'') hold,
there is no error in the protocol after the phase basis measurement. 
That is, we have the relation
\begin{align}
\pr_{R_e,R_s} [\textnormal{(A'')} \cap \textnormal{(B'')}] \leq 1- \text{(phase error probability)}. \label{eq:phaselower}
\end{align}
Moreover, 
by exactly the same way as in Sections \ref{sec:asdf1} and \ref{sec:asdf}, we have 
\begin{align}
\pr_{R_e} [\textnormal{(A'')}] & \geq 1-O\paren*{\frac{1}{\Q}}   \label{eq:pcaphase},\\
\pr_{R_e,R_s} [\textnormal{(B'')} | \textnormal{(A'')} ]  & \geq 
    1-O\Big(\! \max \Big\{\frac{1}{\Q} , \frac{(n_\ell')^{m_0}}{(\Q)^{m_0-m_1}} \Big\} \Big).   \label{eq:pcbphase}
\end{align}
Therefore, by combining inequalities \eqref{eq:phaselower}, \eqref{eq:pcaphase} and \eqref{eq:pcbphase}, we obtain the upper bound \eqref{phaseerr} of the phase  error probability in Lemma \ref{lemm:errorp}.

\section{Secure Quantum Network Code without Classical Communication} \label{sec:SQNC}
In the secure quantum network code given in Theorem \ref{theo:SQNC_protocol},
we assumed that the encoder and decoder share the negligible rate randomness $R_s$ secretly.
The secret shared randomness can be realized by secure communication.
The paper \cite{YSJL14} provided a secure classical communication protocol for the classical network as Proposition \ref{prop:secret_classical}.

\begin{prop}[{\cite[Theorem 1]{YSJL14}}]
Consider a classical network where each channel transmits an element of the finite field $\mathbb{F}_q$ and
each node performs a linear operation. 
Let the inequality $c_1 +c_2 < c_0$ holds 
for the transmission rate $c_0$ from Alice to Bob,
the rate $c_1$ of the noise injected by Eve, and
the rate $c_2$ of the information leakage to Eve.
For any positive integer $\beta$,
there exists a $k$-bit transmission protocol by $n_2:=k\beta c_0(c_0-c_2+1)$ uses of the network such that 
\begin{align*}
P_{\mathrm{err}} \leq k\frac{c_0}{q^{\beta c_0}} \text{  and  }
I(M;E) = 0,
\end{align*}
where $P_{\mathrm{err}}$ is the error probability and $I(M;E)$ is the mutual information between the message $M\in\mathbb{F}_2^k$ and the Eve's information $E$.
\label{prop:secret_classical}
\end{prop}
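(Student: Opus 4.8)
The plan is to reduce the claim to the linear-algebraic description of the network from Section~\ref{sec:model} and then to combine a coset-type randomization for secrecy with a random verification step for correctness, in the spirit of \cite{Jaggi2008,HOKC17,YSJL14}. First I would group $\beta$ consecutive network uses into a single use over the extension field $\FF_{q^\beta}$, so that each channel carries one symbol of $\FF_{q^\beta}$. As in \eqref{H2}, Bob's received block is then
\begin{align*}
Y = K X + W Z,
\end{align*}
where $K$ is the rank-$c_0$ transfer matrix of the network, $\rank W \le c_1$ describes the subspace reachable by Eve's injections, and $Z$ is Eve's noise; likewise, Eve's wiretap observation is a fixed linear image of $X$ of rank at most $c_2$. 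The task thus becomes to encode $M\in\FF_2^k$ into $X$ so that $Y$ determines $M$ whenever $\rank W \le c_1$, while every rank-$\le c_2$ linear function of $X$ stays statistically independent of $M$.

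For the secrecy requirement I would use coset coding: reserve $c_2$ coordinates for a uniformly random pad $S$ and map $(M,S)$ into the transmitted symbols through an invertible linear transform chosen so that any $c_2$-dimensional projection of the codeword is a bijective function of $S$ alone. Since $S$ is uniform and independent of $M$, this gives $I(M;E)=0$ exactly, and the room for it is guaranteed by $c_2<c_0$. For the correctness requirement I would append a negligible number of random check symbols and, exactly as in \cite{Jaggi2008,HOKC17}, let Bob use them to identify $\Ima W$. For a fixed nonzero error pattern the identification fails with probability $O(q^{-\beta c_0})$, so a union bound over the at most $kc_0$ patterns that must be excluded yields $P_{\mathrm{err}} \le kc_0/q^{\beta c_0}$. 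Once $\Ima W$ is known, Bob projects $Y$ onto a complement of $\Ima W$ --- which still retains the message because $c_1<c_0-c_2$ leaves uncorrupted dimensions --- and inverts the identified transfer map to recover $(M,S)$, hence $M$. The factor $c_0(c_0-c_2+1)$ in $n_2$ can be read heuristically as the $c_0$ symbols carried per use times the $c_0-c_2$ information coordinates plus one verification coordinate.

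The hard part will be universality: Alice and Bob know neither $K$ nor Eve's subspaces in advance, so the check symbols must simultaneously pin down the transfer map and the corruption subspace $\Ima W$, while the randomization used for secrecy must not leak through those very check symbols. Calibrating the redundancy so that rank-$c_1$ errors are correctable and rank-$c_2$ observations are perfectly masked, both under the single hypothesis $c_1+c_2<c_0$ and with a failure probability decaying like $q^{-\beta c_0}$, is the technical crux; this is precisely the point that \cite{YSJL14} settles.
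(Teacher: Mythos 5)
The first thing to note is that the paper does not prove this statement at all: Proposition~\ref{prop:secret_classical} is imported verbatim as Theorem~1 of \cite{YSJL14} and is used purely as a black box in Section~\ref{sec:SQNC}, where the protocol is run in the bit basis to distribute the negligible-rate shared randomness needed for Theorem~\ref{theo:SQNC_protocol2}. So there is no internal proof to compare your attempt against; the only question is whether your reconstruction would stand on its own as a proof of the cited theorem.

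As a reconstruction it follows the right family of ideas --- the linear transfer model $Y=KX+WZ$, coset coding for secrecy, random check symbols plus a union bound for correctness --- but it is a plan rather than a proof, and you concede as much by deferring the ``technical crux'' to the very result being proved, which is circular. Two concrete gaps. First, the quantitative content of the proposition, namely the exact block length $n_2=k\beta c_0(c_0-c_2+1)$ and the bound $P_{\mathrm{err}}\le k c_0/q^{\beta c_0}$, is only matched ``heuristically'': the union bound over ``at most $kc_0$ patterns'' is asserted rather than derived, and nothing in the sketch pins down why each individual failure event has probability $q^{-\beta c_0}$ rather than, say, $q^{-\beta}$ or a polynomial-in-$n_2$ multiple thereof. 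Second, the claim $I(M;E)=0$ is argued as if Eve were a passive wiretapper, for which coset coding with a uniform pad of dimension $c_2$ is indeed standard; but in this model Eve is active and adaptive, so $E$ includes her own injections and any information she gleans from how the check symbols (which are functions of the codeword) interact with them. Establishing \emph{exact} independence in that active setting under the single hypothesis $c_1+c_2<c_0$ is precisely what \cite{YSJL14} contributes, and it is not recovered by the passive coset-coding argument alone. If the intent is merely to justify the paper's use of the proposition, no proof is required --- a citation suffices; if the intent is to supply a proof, these two points must be filled in.
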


By attaching the protocol of Proposition \ref{prop:secret_classical} as a quantum protocol,
we can share the negligible rate randomness secretly as the following proof of Theorem \ref{theo:SQNC_protocol2}.

\begin{proof}[Proof of Theorem \ref{theo:SQNC_protocol2}]
Since the protocol of Proposition \ref{prop:secret_classical} can be implemented 
with the quantum network by sending bit basis states instead of classical bits, the following code satisfies the conditions of Theorem \ref{theo:SQNC_protocol2}.

In the same way as \eqref{def:block-length},
we choose $\alpha_\ell := \floor{5\log_q \ell}$, $n_{\ell,1}' := \floor{\ell/\alpha_\ell}$, $n_{\ell,1}:=\alpha n_{\ell,1}'$, $q':=q^{\alpha_\ell}$ for any sufficiently large $\ell$ such that $\alpha_\ell >0$ and $n_{\ell,1}' >3m_0$.
For the implementation of the code given in Section \ref{sec:protocol} with the block-length $n_{\ell,1}$ and the extended field of size $\Q$, 
the sender and receiver need to share 
the secret randomness which consists of $4m_0+2m_0(m_0-m_1)$ elements of $\mathbb{F}_{\Q}$.
Hence, using the protocol of Proposition \ref{prop:secret_classical}
with $(c_0,c_1,c_2):=(m_0,m_1,m_1)$, 
the sender secretly sends 
$k=\lceil (4m_0+2m_0(m_0-m_1))\log_2{\Q}\rceil$ bits to the receiver,
which is called the preparation protocol.
To guarantee that the error of the preparation protocol goes to zero, 
we choose $\beta = \lfloor 2\log_q \log_2 \ell \rfloor$.
Since $k$ is evaluated as 
$k=\lceil (4m_0+2m_0(m_0-m_1))\log_2{\Q}\rceil 
= \ceil{(4m_0+2m_0(m_0-m_1)) \floor{5\log_q \ell} \log_2 q}
\le \lceil 5 (4m_0+2m_0(m_0-m_1)) \log_2 \ell \rceil 
$,
we have $P_{\mathrm{err}} \le O(\log_2 \ell /(\log_2 \ell)^2) \to 0$.
Also, the preparation protocol requires 
$n_{\ell,2} = k \beta m_0(m_0-m_1+1)$ network uses.
Finally,  we apply the code given in 
Theorem \ref{theo:SQNC_protocol} with the block-length $n_{\ell,1}$ and the above chosen $\alpha_\ell$ and $\Q$.

The block-length of this code is $n_{\ell} = n_{\ell,1} +n_{\ell,2}$.
Since $n_{\ell,1} = \Theta(\ell)$ and 
\begin{align*}
n_{\ell,2} &\le  m_0(m_0-m_1+1)
\lceil 5 (4m_0+2m_0(m_0-m_1)) \log_2 \ell \rceil \\
 &\quad\cdot \lfloor 2\log_q \log_2 \ell \rfloor,
\end{align*}
we have $n_{\ell,2}/n_\ell\to 0$ and $n_{\ell,1}/n_\ell\to 1$.
Therefore,
Theorem~\ref{theo:SQNC_protocol} guarantees the conditions \eqref{AX} and \eqref{BX},
and this code do not assume any shared randomness, i.e, \eqref{CX} is satisfied.
Thus, this code realizes the required conditions.
\end{proof}

\section{Secrecy of our code} \label{sec:secrecy}

In this section, we show that the condition \eqref{eq:ent_fid_theo} in Theorem~\ref{theo:SQNC_protocol} and \eqref{BX} in Theorem \ref{theo:SQNC_protocol2}, i.e., 
$$
\lim_{\ell\to\infty} \max_{(\mathcal{F}, S_{n_{\ell}})} n_{\ell}(1-F_e^2(\rho_{\mathrm{mix}},\Lambda_{n_{\ell}}) ) = 0,
$$
guarantees the secrecy of the code.
The leaked information of a quantum protocol $\kappa$ is upper bounded by {\it entropy exchange} $H_e(\rho,\kappa) := H(\kappa\otimes \iota_R(\kb{\varphi}))=H(\kappa_E (\rho))$ as follows,
where $|\varphi\rangle$ is a purification of the state $\rho$,
{$\iota_{R}$ is the identity channel to the reference system,}
and $\kappa_E$ is the channel to the environment.
When the input state $\rho_x$ is generated subject to the distribution $p_x$,
the mutual information between the input system and the environment
is given as
$ H(\kappa_E (\sum_x p_x \rho_x))- \sum_x p_x H(\kappa_E (\rho_x))$,
which is upper bounded by $H_e(\kappa,\sum_x p_x \rho_x)$.
On the other hand,
the entropy exchange is upper bounded by the entanglement fidelity as \cite{Schumacher96}
\begin{align}
H_e(\rho,\kappa)  \leq h(F_e^{2}(\rho,\kappa)) + (1- F_e^2(\rho,\kappa))\log (d-1)^2,   \label{ineq:secret}
\end{align}
where $h(p)$ is the binary entropy defined as $h(p):= p\log p + (1-p)\log(1-p)$ for $0\leq p \leq 1$
and $d$ is the dimension of the input space of $\kappa$.
Hence, 
applying the inequality \eqref{ineq:secret} to an arbitrary averaged protocol $\Lambda_{n_\ell}$ and 
the completely mixed state $\rho_{\mathrm{mix}}$,
because $d=\dim \cH_{\mathrm{code}}^{(n_{\ell})} = O\big(q^{(m_0-2m_1)n_{\ell}}\big)$ in our code,
the condition \eqref{eq:ent_fid_theo} leads that
the entropy exchange of the averaged protocol is asymptotically $0$, i.e., there is no leakage in the averaged protocol.
Thus, the asymptotic correctability \eqref{eq:ent_fid_theo} also guarantees the secrecy of the code in Theorems \ref{theo:SQNC_protocol} and \ref{theo:SQNC_protocol2}.

\section{Conclusion} \label{sec:conclusion}

We have presented an asymptotically secret and correctable quantum network code as a quantum extension of the classical network codes given in \cite{Jaggi2008,HOKC17}.
To introduce our code, the network is constrained that the node operations are invertible linear operations to the basis states.
When the transmission rate of a given network is $m_0$ without attack and the maximum number of attacked channels is $m_1$,
by multiple uses of the network, 
our code achieves the rate $m_0-2m_1$ asymptotically without any classical communication.
Our code needs a negligible rate secret shared randomness but it is implemented by attaching a known secure classical network communication protocol \cite{YSJL14} to our quantum network code.
In the analysis of the code, we only considered the correctability because the secrecy is guaranteed by the correctness of the code protocol.
The correctability is derived analogously to the classical network codes \cite{Jaggi2008,HOKC17} but by evaluating the bit and phase error probabilities separately.


One remaining task is to show whether our code rate $m_0 - 2m_1$ is optimal or not. 
As a first step to discuss this problem, 
we may consider the quantum capacity when the network topology, node operations, and $m_1$ corrupted channels are fixed.
This problem is remained as a future study.


\section*{Acknowledgments}
SS is grateful to Yuuya Yoshida for helpful discussions and comments.
SS is supported by Rotary Yoneyama Memorial Master Course Scholarship (YM).
This work was supported in part by a JSPS Grant-in-Aids for 
Scientific Research (A) No.17H01280 and for Scientific Research (B) No.16KT0017, 
and Kayamori Foundation of Information Science Advancement.

\appendices
\section{Proof of Lemma \ref{lemm:invertible_to_unitary}} \label{sec:invertible_proof}
\begin{proof}[Proof of Lemma \ref{lemm:invertible_to_unitary}]
For any $x=(x_1,...,x_m),y=(y_1,...,y_m) \in \mathbb{F}_q^{m}$,
define an inner product 
\begin{align}
(x,y) := \sum_{i=1}^{m} \tr x_iy_i = \tr \sum_{i=1}^{m} x_iy_i, \label{def:inner_product}
\end{align}
where $\tr$ is defined in Section \ref{sec:notations}.
Let $T$ be a $m\times m$ matrix on $\mathbb{F}_q$.
If $x,y$ are considered as column vectors, it holds that $(Tx,y) = (x,T^{\top}y)$.
On the other hand, if $x,y$ are considered as row vectors, it holds that $(xT,y) = (x,yT^{\top})$.

First, we show $\lefto{A} | Z \rangle_p = | (A^{-1})^{\top} Z \rangle_p$ by considering $\mathbb{F}_q^{m}$ as a column vector space. For $\mathsf{L}^{(1)}(A) := \sum_{x\in\mathbb{F}_q^m} |Ax\rangle_{bb}\langle x|$ and $z\in \mathbb{F}_q^m$,
we have
\begin{align*}
 \mathsf{L}^{(1)}(A) | z \rangle_p   
&= \frac{1}{\sqrt{q^m}} \sum_{x\in\mathbb{F}_q^{m}} \omega^{-(x,z)} |Ax\rangle_b\\
                         &= \frac{1}{\sqrt{q^m}} \sum_{x'\in\mathbb{F}_q^{m}} \omega^{-(A^{-1}x', z)} |x'\rangle_b\\
                         &= \frac{1}{\sqrt{q^m}} \sum_{x'\in\mathbb{F}_q^{m}} \omega^{-(x', (A^{-1})^{\top}z)} |x'\rangle_b\\
                         &= |(A^{-1})^{\top} z\rangle_p.
\end{align*}
Since $\lefto{A} = \paren*{\mathsf{L}^{(1)}(A)}^{\otimes n}$,
we have $\lefto{A} | Z \rangle_p = | (A^{-1})^{\top} Z \rangle_p$.

Next, consider $\mathbb{F}_q^n$ as an $n$-dimensional row vector space over $\mathbb{F}_q$.
For $\mathsf{R}^{(1)}(B) := \sum_{x\in\mathbb{F}_q^n} |xB\rangle_{bb} \langle x|$ and $z\in \mathbb{F}_q^n$,
we have
\begin{align*}
\mathsf{R}^{(1)}(B) |z \rangle_p &= \frac{1}{\sqrt{q^n}} \sum_{x\in\mathbb{F}_q^{n}} \omega^{-(x, z)} |xB\rangle_b\\
                         &= \frac{1}{\sqrt{q^n}} \sum_{x''\in\mathbb{F}_q^{n}} \omega^{-(x''B^{-1}, z)} |x''\rangle_b\\
                         &= \frac{1}{\sqrt{q^n}} \sum_{x''\in\mathbb{F}_q^{n}} \omega^{-(x'', z(B^{-1})^{\top})} |x''\rangle_b\\
                         &= | z(B^{-1})^{\top}\rangle_p.
\end{align*}
Since $\righto{B} = \paren*{\mathsf{R}^{(1)}(B)}^{\otimes m}$,
we have $\righto{B} |Z\rangle_p = | Z(B^{-1})^{\top} \rangle_p$.
\end{proof}

\section{Proof of (\ref{eq:binary_PbPp})}
In this section, we show Lemmas \ref{lemm:maxen_pb} and \ref{lemm:PBPP} which shows the relationship between two maximally entangled states and projections $P_1,P_2$ defined by the bit and the phase bases.

Define the following maximally entangled states with respect to the bit and phase bases:
\begin{align*}
|\Phi_1\rangle := \frac{1}{\sqrt{q^m}}\sum_{i\in\mathbb{F}_q^m} |i, i\rangle_b, \quad
|\Phi_2\rangle := \frac{1}{\sqrt{q^m}}\sum_{z\in\mathbb{F}_q^m} |z, \bar{z}\rangle_p.
\end{align*}
We use the inner product $(\cdot,\cdot)$ defined in \eqref{def:inner_product} for the proofs.

\begin{lemm} \label{lemm:maxen_pb}
$|\Phi_1\rangle = |\Phi_2\rangle$.
\begin{proof}
The lemma is proved as follows:
\begin{align}
|\Phi_2\rangle  
                &= \frac{1}{\sqrt{q^m}}
                    \Big(\!\sum_{z\in\mathbb{F}_q^m}\!
                    \Big(\!\sum_{j\in\mathbb{F}_q^m}\! \frac{\omega^{-(z,j)}}{\sqrt{q^m}} |j\rangle_b\Big) 
               \otimes \Big(\!\sum_{l\in\mathbb{F}_q^m}\! \frac{\omega^{(z,l)}}{\sqrt{q^m}} |l\rangle_b\Big) \Big) \nonumber\\
                &= \frac{1}{\sqrt{q^m}}\sum_{z,j,l\in\mathbb{F}_q^m}\frac{\omega^{-(z,j-l)}}{q^m} |j,l\rangle_b \nonumber \\
                &= \frac{1}{\sqrt{q^m}}\sum_{j\in\mathbb{F}_q^m} |j,j\rangle_b = |\Phi_1\rangle, \label{eq:sum_over_phase}
\end{align}
where the first equality in \eqref{eq:sum_over_phase} holds because
\begin{align*}
\sum_{z\in\mathbb{F}_q^m}\frac{\omega^{-(z,j - l)}}{q^m} = 
    \begin{cases}
    0   &   \text{if } j \neq l,\\
    1   &   \text{otherwise}.
    \end{cases} 
\end{align*}
\end{proof}
\end{lemm}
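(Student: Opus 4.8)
The plan is to expand both maximally entangled states into the bit basis and reduce the claim to a single additive-character orthogonality identity. First I would substitute the definition of the phase basis into $|\Phi_2\rangle$. For the first tensor factor, writing $|z\rangle_p = q^{-m/2}\sum_{j\in\mathbb{F}_q^m}\omega^{-(j,z)}|j\rangle_b$, and noting that the conjugate state expands with the opposite sign in the exponent, $|\bar z\rangle_p = q^{-m/2}\sum_{l\in\mathbb{F}_q^m}\omega^{+(l,z)}|l\rangle_b$ (since complex conjugation sends $\omega^{-(l,z)}$ to $\omega^{(l,z)}$), I would substitute both into $|\Phi_2\rangle = q^{-m/2}\sum_z |z\rangle_p\otimes|\bar z\rangle_p$. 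Collecting the three summations yields a triple sum over $z,j,l$ whose coefficient on $|j,l\rangle_b$ carries the phase $\omega^{-(j-l,\,z)}$, where the inner product $(\cdot,\cdot)$ is the trace pairing defined in \eqref{def:inner_product}.

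The key step is then to perform the sum over $z$ first, for each fixed pair $(j,l)$. This produces the factor $\sum_{z\in\mathbb{F}_q^m}\omega^{-(j-l,\,z)}$, which I claim equals $q^m$ when $j=l$ and vanishes otherwise. Granting this identity, every off-diagonal term with $j\neq l$ drops out, each diagonal term contributes a factor $q^m$ that cancels against the normalization $q^{-m}$, and what remains is exactly $q^{-m/2}\sum_j |j,j\rangle_b = |\Phi_1\rangle$, which closes the argument.

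The one point requiring care, and the step I would treat as the genuine content, is the character-sum identity for $w := j-l$, namely that $\sum_{z\in\mathbb{F}_q^m}\omega^{-(w,z)}$ equals $q^m$ for $w=0$ and vanishes for $w\neq 0$. The map $z\mapsto \omega^{-(w,z)}$ is a homomorphism from the additive group $\mathbb{F}_q^m\cong\mathbb{F}_s^{mt}$ into the $s$-th roots of unity, so by the standard orthogonality of characters its sum over the whole group vanishes unless the character is trivial. Triviality means $(w,z)=0$ for all $z$, so the real obstacle reduces to the nondegeneracy of the trace pairing: I would invoke that for any nonzero component $w_i$ the map $z_i\mapsto\tr(w_i z_i)$ is a nonzero $\mathbb{F}_s$-linear functional on $\mathbb{F}_q$, hence surjective onto $\mathbb{F}_s$, which forces $w=0$. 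Stating this nondegeneracy as the single external fact, the lemma follows by the direct computation sketched above with no further difficulty.
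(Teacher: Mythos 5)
Your proposal is correct and follows essentially the same route as the paper: expand $|\bar z\rangle_p$ with the conjugated phase, collect the triple sum, and evaluate the sum over $z$ first via the orthogonality relation $\sum_{z}\omega^{-(z,j-l)}=q^m\delta_{j,l}$. The only difference is that you additionally justify this character-sum identity through nondegeneracy of the trace pairing, a standard fact the paper simply asserts, so your write-up is if anything slightly more self-contained.
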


From the above lemma, we denote $|\Phi\rangle := |\Phi_1\rangle =  |\Phi_2\rangle$.
Eq.~(\ref{eq:binary_PbPp}) is proved by the following lemma.
\begin{lemm}    \label{lemm:PBPP}
$P_1P_2 = P_2P_1 = \kb{\Phi}$.
\begin{proof}
The lemma is proved as follows:
\begin{align*}
P_1P_2 &= \sum_{i,z\in\mathbb{F}_q^m} {_b}\langle i,i|z,\bar{z}\rangle_p |i,i\rangle_{bp}\langle z,\bar{z}|\\
       &= \sum_{i,z\in\mathbb{F}_q^m} \frac{\omega^{-(z,i-i)}}{q^m} \sum_{j,l\in\mathbb{F}_q^m} \frac{\omega^{(z,j-l)}}{q^m} |i,i\rangle_{bb}\langle j,l|\\
       &= \sum_{i,j,l,z\in\mathbb{F}_q^m} \frac{\omega^{(z,j-l)}}{q^{2m}} |i,i\rangle_{bb}\langle j,l|\\
       &= \sum_{i,j\in\mathbb{F}_q^m} \frac{1}{q^m} |i,i\rangle_{bb}\langle j,j| = \kb{\Phi}. 
\end{align*}
\end{proof}
\end{lemm}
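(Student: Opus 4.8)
The plan is to prove $P_1P_2 = \kb{\Phi}$ by a direct computation in the bit basis, and then obtain $P_2P_1 = \kb{\Phi}$ for free by taking adjoints, since $P_1$, $P_2$, and $\kb{\Phi}$ are all Hermitian. The central quantity is the overlap ${}_b\langle x,x|z,\bar z\rangle_p$ between a generic rank-one term of $P_1$ and one of $P_2$. First I would record, from the definition of the phase basis, that ${}_b\langle x|z\rangle_p = q^{-m/2}\,\omega^{-(x,z)}$ while ${}_b\langle x|\bar z\rangle_p = q^{-m/2}\,\omega^{(x,z)}$, using the inner product $(\cdot,\cdot)$ of \eqref{def:inner_product}. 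Multiplying these two factors, the phases cancel and the overlap collapses to the constant $1/q^m$, independent of both $x$ and $z$.

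Given this, I would assemble the product as $P_1P_2 = \sum_{x,z}\,{}_b\langle x,x|z,\bar z\rangle_p\,|x,x\rangle_{bp}\langle z,\bar z| = \frac{1}{q^m}\big(\sum_x |x,x\rangle_b\big)\big(\sum_z {}_p\langle z,\bar z|\big)$. By Lemma \ref{lemm:maxen_pb} the first sum equals $\sqrt{q^m}\,|\Phi\rangle$ and the second equals $\sqrt{q^m}\,\langle\Phi|$, so the prefactors combine to give exactly $\kb{\Phi}$. Equivalently, one may expand ${}_p\langle z,\bar z|$ back into the bit basis and sum over $z$ first, invoking the character orthogonality $\sum_z \omega^{(z,j-l)} = q^m\delta_{j,l}$ that already powers the proof of Lemma \ref{lemm:maxen_pb}; this forces $j=l$ and reproduces $\kb{\Phi}$. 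Either route is a two-line calculation, and the symmetric identity $P_2P_1 = (P_1P_2)^\dagger = \kb{\Phi}$ then follows immediately.

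There is no genuine analytic difficulty here; the only point that must be handled correctly is the complex conjugation in the second tensor factor of $|z,\bar z\rangle_p$. It is precisely this conjugation that turns the two phase factors $\omega^{\mp(x,z)}$ into inverses of one another, so that their product is phase-free and the $z$-sum either factors out cleanly or collapses under orthogonality. Had the second factor been $|z\rangle_p$ rather than $|\bar z\rangle_p$, the overlap would carry a nontrivial $\omega^{-2(x,z)}$ and the product would fail to reduce to a rank-one projector. Thus the crux is simply to verify that the construction's conjugation is exactly what makes the two diagonal subspaces $\Ima P_1$ and $\Ima P_2$ meet along the single line $\mathbb{C}|\Phi\rangle$, with the composition of the two projections landing precisely on it.
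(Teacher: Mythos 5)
Your proposal is correct and follows essentially the same route as the paper: both hinge on the observation that the conjugation in the second factor makes the overlap ${}_b\langle x,x|z,\bar z\rangle_p$ collapse to the constant $q^{-m}$, after which the double sum reduces to $\kb{\Phi}$. Your primary packaging---factoring the sum into $\bigl(\sum_x |x,x\rangle_b\bigr)\bigl(\sum_z {}_p\langle z,\bar z|\bigr)$ and invoking Lemma~\ref{lemm:maxen_pb} rather than redoing the character-orthogonality sum as the paper does---is a minor but clean simplification, and your adjoint argument for $P_2P_1$ is fine since all operators involved are Hermitian.
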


\section{Proof of Lemma \ref{lemm:max_zero_prob}}  \label{sec:two_lemm_max}

We use the following lemma \cite[Claim 5]{Jaggi2008} to prove Lemma~\ref{lemm:max_zero_prob}.
\begin{lemm}[{\cite[Claim 5]{Jaggi2008}}] \label{lemm:vandermonde}
Suppose independent $m$ random variables $\vsec_1,\ldots,\vsec_m\in\mathbb{F}_q$ are uniformly chosen in $\mathbb{F}_q$
and define the random matrix $Q\in \mathbb{F}_q^{l\times m}$ as $Q_{i,j}:=(\vsec_j)^i$.
For any row vectors $x\in\mathbb{F}_q^m$ and $y\in\mathbb{F}_q^{l}\backslash\{\zerom{1}{l}\}$ ($l\geq m$), we have
\begin{align}
\pr_\vsec[x=yQ] &\leq \Big(\frac{l}{q}\Big)^m. \label{ineq:vand}
\end{align}
%
\end{lemm}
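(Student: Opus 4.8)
The plan is to interpret the columns of $Q$ as evaluations of a single univariate polynomial, and thereby factor the event $\{x=yQ\}$ into $m$ independent one-variable conditions. Concretely, I would first observe that since $Q_{i,j}=(\vsec_j)^i$, the $j$-th coordinate of the row vector $yQ$ equals $\sum_{i=1}^{l} y_i (\vsec_j)^i$. Introducing the polynomial $f(T):=\sum_{i=1}^{l} y_i T^i\in\mathbb{F}_q[T]$, which depends only on $y$ and not on $j$, this coordinate is exactly $f(\vsec_j)$. Hence the event $\{x=yQ\}$ coincides with $\bigcap_{j=1}^{m}\{f(\vsec_j)=x_j\}$.

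Since $\vsec_1,\ldots,\vsec_m$ are independent and each condition $f(\vsec_j)=x_j$ involves only the single variable $\vsec_j$, the probability factorizes as
\[
\pr_\vsec[x=yQ]=\prod_{j=1}^{m}\pr_{\vsec_j}[f(\vsec_j)=x_j].
\]
I would then bound each factor by root counting. Because $y\neq\zerom{1}{l}$ and $f$ has no constant term, $f$ is a nonzero polynomial of degree $d$ with $1\le d\le l$, so each $f(T)-x_j$ is a nonzero polynomial of the same degree $d$ and therefore has at most $l$ roots in $\mathbb{F}_q$. As $\vsec_j$ is uniform on $\mathbb{F}_q$, each factor is at most $l/q$, and multiplying the $m$ factors yields the asserted bound $(l/q)^m$.

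The crux is the first step: recognizing the Vandermonde-type columns as evaluations of one common polynomial $f$, which simultaneously renders the coordinate events independent (so the product form is legitimate) and reduces each coordinate to a root-counting estimate. Once this reformulation is in place, the remaining ingredients—that a nonzero polynomial of degree $d$ over a field has at most $d$ roots, and that the $\vsec_j$ are independent—are standard, so I do not anticipate a genuine obstacle beyond verifying that $f$ has positive degree, which is precisely where the hypothesis $y\neq\zerom{1}{l}$ (together with the absence of a constant term) enters. Note also that the hypothesis $l\ge m$ plays no role in the bound itself and is needed only for the intended application.
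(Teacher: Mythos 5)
Your proof is correct. Note that the paper itself does not prove this lemma at all: it is imported verbatim as \cite[Claim 5]{Jaggi2008}, so there is no in-paper argument to compare against. Your derivation --- reading the $j$-th coordinate of $yQ$ as $f(\vsec_j)$ for the single polynomial $f(T)=\sum_{i=1}^{l}y_iT^i$, which is nonzero of positive degree exactly because $y\neq\zerom{1}{l}$ and $f$ has no constant term, then factorizing over the independent $\vsec_j$ and bounding each factor by $l/q$ via root counting --- is the standard argument behind the cited claim and serves as a valid self-contained proof; your observation that $l\geq m$ is not needed for the bound itself is also accurate.
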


Now, we prove Lemma \ref{lemm:max_zero_prob}.
\begin{proof}[Proof of Lemma \ref{lemm:max_zero_prob}]
Let $x=(x^\mathcal{A},x^\mathcal{B},x^\mathcal{C})\in \mathbb{F}_{\Q}^{\mm}\times\mathbb{F}_{\Q}^{\mm}\times\mathbb{F}_{\Q}^{n_\ell'\!-\!2\mm\!}$ be a nonzero row vector.
From the definition of $R_1^{\vsec}$, we have the relations
%
\begin{align}
&x ((R_1^{\vsec})^{-1})^{\mathcal{A}} = x^\mathcal{A}-x^\mathcal{B}(Q_3^{\top}+Q_4)-x^\mathcal{C}Q_1,  \label{eq111} \\
&x ([R_1^{\vsec}]_p^{-1})^{\mathcal{B}} = x^\mathcal{B}+x^\mathcal{A}(Q_4^{\top}+Q_3+Q_1^{\top}Q_2)+x^\mathcal{C}Q_2. \label{eq22222}
\end{align}

The inequality \eqref{ineq:bit_max} is proved as follows.
The relation \eqref{eq111} implies that the condition $x ((R_1^{\vsec})^{-1})^{\mathcal{A}} = \zerom{1}{\mm} $ holds in the following cases.
In each case, the probability for $x ((R_1^{\vsec})^{-1})^{\mathcal{A}} =  \zerom{1}{\mm}$ is calculated by Lemma \ref{lemm:vandermonde} as follows.
\begin{enumerate}
\item If $x^{\mathcal{C}} \neq \zerom{1}{n_\ell'\!-\!2\mm}$, the inequality \eqref{ineq:vand} for $Q:= Q_1$ implies 
\begin{align*}
\pr_\vsec[ x^\mathcal{A}-x^\mathcal{B}(Q_3^{\top}+Q_4) = x^\mathcal{C}Q_1] \leq \Big(\frac{n_\ell'\!-\!2\mm\!}{\Q}\Big)^{\mm}.
\end{align*}

\item If $x^\mathcal{B}\neq \zerom{1}{\mm}$ and $x^{\mathcal{C}} = \zerom{1}{n_\ell'\!-\!2\mm}$, 
the inequality \eqref{ineq:vand} for $Q:= Q_4$ implies 
\begin{align*}
&\pr_\vsec[ x^\mathcal{A}-x^\mathcal{B}Q_3^{\top} = x^\mathcal{B}Q_4 ] 
\leq \Big(\frac{\mm}{\Q}\Big)^{\mm}.
\end{align*}

\item If $x^\mathcal{A}\neq \zerom{1}{\mm}$, $x^\mathcal{B}=\zerom{1}{\mm}$, and $x^{\mathcal{C}} = \zerom{1}{n_\ell'\!-\!2\mm}$, 
the probability that \eqref{eq111} holds is zero.
\end{enumerate}
Since the inequality $n_\ell'> 3m_0$ holds from \eqref{cond:nprime}, we have 
\begin{align}
\Big(\frac{\mm}{\Q}\Big)^{\mm} < \Big(\frac{n_\ell'-2\mm}{\Q}\Big)^{\mm} .  \label{ineq:3m}
\end{align}
Therefore, we obtain the inequality \eqref{ineq:bit_max} in Lemma~\ref{lemm:max_zero_prob}.

Next, we show the inequality \eqref{ineq:phase_max} as follows.
The relation \eqref{eq22222} implies that 
the condition $x ([R_1^{\vsec}]_p^{-1})^{\mathcal{B}} =\zerom{1}{\mm}$ holds in the following cases.
In each case, the probability for $x ([R_1^{\vsec}]_p^{-1})^{\mathcal{B}} =\zerom{1}{\mm}$ is calculated by Lemma \ref{lemm:vandermonde} as follows.
\begin{enumerate}
\item If $x^{\mathcal{C}} \neq \zerom{1}{n_\ell'\!-\!2\mm\!}$, the inequality \eqref{ineq:vand} for $Q:= Q_2$ implies 
\begin{align*}
&\pr_\vsec[  x^\mathcal{B}+x^\mathcal{A}(Q_4^{\top}+Q_3+Q_1^{\top}Q_2)=-x^\mathcal{C}Q_2] \\
&\leq \Big(\frac{n_\ell'\!-\!2\mm\!}{\Q}\Big)^{\mm}.
\end{align*}

\item If $x^\mathcal{A}\neq \zerom{1}{\mm}$ and $x^{\mathcal{C}} = \zerom{1}{n_\ell'\!-\!2\mm\!}$, 
the inequality \eqref{ineq:vand} for $Q:= Q_3$ implies 
\begin{align*}
\pr_\vsec[  x^\mathcal{B}+x^\mathcal{A}(Q_4^{\top}+Q_1^{\top}Q_2)= -x^\mathcal{A}Q_3] 
\leq \Big(\frac{\mm}{\Q}\Big)^{\mm}.
\end{align*}

\item If $x^\mathcal{A}= \zerom{1}{\mm}$, $x^\mathcal{B}\neq\zerom{1}{\mm}$, and $x^{\mathcal{C}} = \zerom{1}{n_\ell'\!-\!2\mm\!}$, 
the probability that \eqref{eq22222} holds is zero.

\end{enumerate}
Therefore, from the inequality \eqref{ineq:3m}, we obtain the inequality \eqref{ineq:phase_max} in Lemma~\ref{lemm:max_zero_prob}.
\end{proof}

\section{Proof of \eqref{lemm:dim_WZ}} \label{sec:lem_dim_WZ}
From $\dim \mathcal{S}_u^{\perp} = m_2 - \rank [u_{i(1)}, \ldots,  u_{i(m_2)}]$, we have
\begin{align*}
 &\pr\!\big[\!\dim \!\mathcal{S}_u^{\perp} \!\!=\! \rank\! \tilde{W} \!\big]\! 
 \!=\! \pr\!\!\big[\!\rank [u_{i(1)}\!,\! \ldots\!, \! u_{i(m_2)}] \!=\! \rank\! R_{2\!,b}\big].
\end{align*}
Since $R_{2,b} = [u_{i(1)}, \ldots,  u_{i(m_0)}]$ is a random matrix with $\rank R_{2,b} =m_0-m_1$,
this probability is equivalent to
\begin{align*}
   & \pr\bparen*{\rank [u_{i(1)}, \ldots,  u_{i(m_2)}] = \rank R_{2,b}}\\
   &=  \pr\!\big[\!\rank [v_1,\ldots,v_{m_2}] = m_0-m_1 \big\lvert \nonumber   \\
     &\qquad\qquad\rank [v_1,\ldots,v_{m_0}]=m_0-m_1, v_k \in \mathbb{F}_{\Q}^{m_0-m_1}\big].\nonumber
\end{align*}
Therefore, it holds that
\begin{align}
   \lefteqn{\!\!\! \pr\bparen*{\rank [u_{i(1)}, \ldots,  u_{i(m_2)}] = \rank R_{2,b}}}\nonumber\\
 &\!\!\!\!\geq  \pr\bparen*{\rank [v_1,\ldots,v_{m_2}] = m_0-m_1  \big\lvert v_k \in \mathbb{F}_{\Q}^{m_0-m_1}}    \nonumber\\
 &\!\!\!\!\geq  \pr\bparen*{\rank [v_1,\ldots,v_{m_0\!-m_1}]\!=\!  m_0\!-\!m_1 \big\lvert v_k \in \mathbb{F}_{\Q}^{m_0-m_1}}.\!\! \label{prob:full_rank}
\end{align}
The probability (\ref{prob:full_rank}) is equivalent to the probability to choose $m_0-m_1$ independent vectors in $\mathbb{F}_{\Q}^{m_0-m_1}$:
\begin{align*}
     &\pr\bparen*{\rank [v_1,\ldots,v_{m_0-m_1}] = m_0-m_1 \big\lvert  v_k \in \mathbb{F}_{\Q}^{m_0-m_1}}\\
    &=\frac{(\Q)^{m_0\!-m_1}}{(\Q)^{m_0\!-m_1}}\!\cdot\!\frac{(\Q)^{m_0\!-m_1}\!\!-\!\Q}{(\Q)^{m_0\!-m_1}}\!
     \cdots\! 
     \frac{(\Q)^{m_0\!-m_1}\!\!-\!(\Q)^{m_0\!-\!m_1\!-\!1}}{(\Q)^{m_0-m_1}}\\
    &=1 - O\paren*{\frac{1}{\Q}}.  
\end{align*}
Therefore, \eqref{lemm:dim_WZ} holds with probability at least $1 - O(1/{\Q})$.

\end{document}